\pdfoutput=1
\documentclass[reprint,amsmath,amssymb,aps]{revtex4-2}

\usepackage{graphicx}
\usepackage{dcolumn}
\usepackage{bm}
\usepackage{amsthm}
\usepackage{mathtools}
\usepackage{xcolor}
\usepackage{url}

\usepackage[colorlinks=true,
            linkcolor=blue,     
            citecolor=blue,     
            urlcolor=blue]{hyperref}

\newcommand{\hl}[1]{\textcolor{orange}{#1}}
\usepackage{tikz}
\usetikzlibrary{arrows.meta,positioning,calc}

\newtheorem{proposition}{Proposition}

\begin{document}

\preprint{APS/123-QED}

\title{Glassy dynamics of a metropolitan human mobility model? asymptotically approaching the gravitational equilibrium}

\author{Yixuan Y. Zheng}
\email{yixuan.z.5b0a@m.isct.ac.jp}
\affiliation{Department of Systems and Control Engineering,
Institute of Science Tokyo, Yokohama, Kanagawa 226-8502, Japan}

\author{Hideki Takayasu}
\email{takayasu.hideki@gmail.com}
\affiliation{Department of Computer Science,
Institute of Science Tokyo, Yokohama, Kanagawa 226-8502, Japan}

\author{Misako Takayasu}
\email{takayasu@comp.isct.ac.jp}
\thanks{Corresponding author}
\affiliation{Department of Systems and Control Engineering,
Institute of Science Tokyo, Yokohama, Kanagawa 226-8502, Japan}
\affiliation{Department of Computer Science,
Institute of Science Tokyo, Yokohama, Kanagawa 226-8502, Japan}

\graphicspath{{images/}}

\date{\today}

\begin{abstract}
This research establishes a connection between macroscopic urban commuting flows and thermal equilibrium. Driven by massive mobility data tracking over two million individuals across six Japanese cities for a full year, we introduce the Gravity-based Home Swapping Model (GHSM), which applies Metropolis dynamics to urban commuting by treating individuals as interacting particles. In this framework, total commuting time dictates the system energy, and temperature determines how strongly human mobility depends on energy-saving habits, so that urban commuting can be analyzed through an evaluable free energy and simulated in almost the same manner as physical systems of matter, with the maximum-entropy doubly constrained gravity model as the stationary state. We first find that the residential dynamics is glassy: the system exhibits the signatures of kinetically constrained models, namely ageing, hysteresis, and freezing near yet short of equilibrium. Calibration to observed origin-destination matrices reveals that real cities are close to and asymptotically approaching thermal equilibrium. Simulations initialized from arbitrary states evolve toward the observed state, indicating that urban constraints largely shape the commuting flow. The GHSM is as tractable as spin models, and may open urban problems to the broader interest of the physics community.
\end{abstract}

\maketitle
\section{Introduction}
\label{sec:introduction}

Commuting is a collective outcome shaped by urban structure. Where people live and where they work are set by the layout of residential districts, the location of firms, and the reach of the transport network, and within this layout each person selects a home and a job that keep the daily travel burden low. The standard record of this outcome is the origin--destination (OD) matrix $\mathbf{f}=\{f_{ij}\}$, whose entry counts the people who live at $i$ and work at $j$. The OD matrix has served for half a century as the primary object of commuting research, because it holds in one table the relation between individual choice and urban space \cite{wilson1967,tinbergen1962,fotheringham1989}.

At the level of the OD matrix, commuters behave much as particles do: they move under fixed rules and under a bound on the energy they spend. K\"olbl and Helbing found that daily travel energy follows a canonical distribution with a mean near $615$~kJ per person, obtained by maximizing entropy at fixed average energy \cite{kolbl2003}. This correspondence has supported a physical line of work on human mobility, from scaling laws in spatial networks \cite{barthelemy2011,barthelemy2019} to a revised electric circuit model in which human flow is treated as a current and renormalized across spatial resolutions \cite{zhong2025}. Within this line the gravity model is the most widely used description, and its doubly constrained form imposes three conditions on the commuting flow: the row sums are the numbers of homes, the column sums are the numbers of jobs, and the total travel cost is fixed \cite{wilson1967,wilson1970}. Wilson derived this form from the maximum-entropy principle \cite{jaynes1957}: among all OD matrices meeting the three conditions, the observed one is the most probable, which in statistical mechanics is the state of lowest free energy \cite{boltzmann1877,callen1985}. Its structure is as simple as that of a spin model, and it has served as an adjustable mathematical tool in applied transport research ever since \cite{erlander1990,lenormand2016}.

The physical content of the gravity model has stayed in the background. A maximum-entropy solution describes an equilibrium state: a state whose probability distribution over configurations stays fixed under the underlying dynamics, so that macroscopic observables settle and the system loses the memory of its history \cite{callen1985}. Fitting the gravity model to an observed OD matrix therefore asserts that the city has already arrived at such a state. Whether a real city satisfies this assertion is an empirical question, and answering it needs data that resolve individual configurations. The size of the data is the binding constraint here, and it explains why glassy dynamics has been studied mostly on lattices \cite{RitortSollich2003,BerthierBiroli2011} and why social applications have stayed at the level of stylized agent models \cite{dallasta2008,gauvin2010,abella2022}. Small samples carry selection bias and cover few configurations, so the collective behavior they yield is weak evidence and the particle analogy remains loose. We use mobility data covering more than two million individuals in six Japanese metropolitan areas over a full year, a scale at which each commuter can be treated as a particle and the macroscopic regularities become measurable \cite{zheng2024plosone,zheng2026jsp}.

We introduce the Gravity-based Home Swapping Model (GHSM), a microscopic dynamics whose stationary state is the gravity solution. In the GHSM, pairs of commuters exchange residences, and every home marginal and every job marginal stays fixed. The pair swap is a conserved-exchange move in the sense of Kawasaki \cite{kawasaki1966}, with conservation imposed on each marginal rather than on a single global sum, and it is the elementary unit of rearrangement: a cyclic exchange among $k$ commuters decomposes into pair swaps, and it is the closed counterpart of a housing vacancy chain \cite{white1970,chase1991}. Three statements fix the identification with the gravity model. First, the dynamics satisfies detailed balance with respect to the Boltzmann distribution at inverse temperature $\beta$, and since pair swaps connect any two OD matrices sharing the same marginals \cite{ryser1957,diaconis1998}, it converges to the canonical distribution over OD matrices \cite{metropolis1953,hastings1970}. Second, the doubly constrained gravity solution is the microcanonical description of the same system, with the total travel cost in the role of the energy. Third, the two descriptions agree: the Landau free energy is strictly convex on the constraint set and ensemble equivalence follows \cite{touchette2015,sagarra2013,squartini2015}. The equilibrium assertion thus becomes testable: a city is at equilibrium when its empirical OD matrix coincides with the stationary state of the GHSM.

Our results connect statistical physics and complex social systems in three ways. First, they give the doubly constrained gravity model a dynamical foundation, promoting a static inference to the stationary state of an explicit microscopic dynamics. Second, they provide the first empirical identification of kinetic arrest with provably trivial thermodynamics in a real human system, extending the kinetically constrained paradigm beyond condensed matter. Third, the construction generalizes: any maximum-entropy model whose constraints admit constraint-preserving elementary moves can be equipped with the same dynamics and subjected to the same reachability test---gravity models of trade and migration immediately, and inverse-Ising descriptions of biological populations in principle. A practical corollary follows for cities: the remaining descent of the total commuting cost is kinetically arrested, so alleviating the commuting burden requires external intervention---new railway and housing construction, rather than spontaneous rearrangement.
\section{Model and methods}
\label{metho:sec_methods}
Each part of this section feeds a specific part of the Results. The data (Sec.~\ref{metho:subsec_data}) supply the OD matrices and travel times of six Japanese metropolitan areas. The two statistical routes to the same equilibrium (Sec.~\ref{metho:subsec_routes}), Wilson's fixed-cost gravity model (Sec.~\ref{metho:subsec_gravity}) and the Gravity-based Home Swapping Model (Sec.~\ref{metho:subsec_ghsm}), define the Landau free energy whose convexity, and the resulting ensemble equivalence, are proved in Sec.~\ref{sec:exact-results}. The reference states and the order parameter $\phi_{\mathrm{norm}}$ (Sec.~\ref{metho:subsec_reference}) are the observables of every dynamical measurement in Secs.~\ref{sec:phases}--\ref{result:subsec_aging}. The calibration (Sec.~\ref{metho:subsec_calibration}) returns the city parameters $(\gamma^{*}, T^{*})$ used throughout Secs.~\ref{sec:calibration-results}--\ref{sec:anatomy}, and the simulation protocols (Sec.~\ref{metho:subsec_protocols}) specify the runs; details are in Appendixes~\ref{app:data} and \ref{app:protocols}.

\subsection{Data}
\label{metho:subsec_data}
We analyze six major Japanese metropolitan areas, Tokyo, Osaka, Nagoya, Fukuoka, Sapporo, and Sendai, each with a population above 2 million \cite{estat2024}, partitioned into 1~km $\times$ 1~km grid cells (approximately 13{,}000 cells for Tokyo).

The mobility data consist of anonymized smartphone GPS records provided by Agoop Corp., covering roughly one million users per day nationwide with a positional accuracy of about 10~m. Each record contains a (daily-randomized) user ID, timestamp, coordinates, and home/work city codes at the municipality level; exact residential coordinates are blurred to protect privacy. We restrict the analysis to weekdays in 2023, excluding weekends and holidays, and retain users with more than 100 records per day whose home and work city codes both fall within the target areas.

Home and workplace locations are identified at a 100~m grid resolution. A user's home grid is defined by the first record at 5~a.m.\ within the home city together with an accumulated stay of over 4 hours in that grid; a workplace grid requires at least 5 hours of stay within the work city in a grid distinct from home. When multiple candidate workplaces exist ($\sim$3\% of users), the first workplace visited after leaving home is used, which yields a one-to-one home--work assignment per commuter, a prerequisite for constructing a consistent origin--destination (OD) matrix. Under these criteria, about 60\% of users are classified as commuters, broadly consistent with official commuting statistics \cite{statbureau_commuter} given the known age bias of smartphone data \cite{zheng2024plosone,zheng2026jsp}.

Commuting time is estimated for each commuter as the shortest interval between the last record at home and the first record at work. The resulting distribution is approximately exponential with a bias toward short commutes (SM Fig.~S1), and we cap commuting times at 120 minutes, which retains over 99.5\% of the data. The average commuting time in the Tokyo metropolitan area is about 39.9 minutes, shorter than the 45.9 minutes reported for household heads in official statistics \cite{statbureau_commuter}, consistent with our sample including non-household heads such as younger and part-time workers who tend to live closer to their workplaces.

\subsection{Two routes to the equilibrium OD matrix}
\label{metho:subsec_routes}
This section develops two routes to the most probable OD matrix of a metropolitan area. The first is static: Wilson's doubly constrained gravity model \cite{wilson1967,wilson1970}, which specifies the most probable OD matrix $\mathbf{f}^{*}$ as the solution of a constrained entropy maximization. The second is dynamical: the Gravity-based Home Swapping Model (GHSM), a Metropolis process \cite{metropolis1953,hastings1970} in which commuters exchange residences one pair at a time, and whose stationary distribution concentrates on the same $\mathbf{f}^{*}$. The two routes are equivalent, in that the maximum-entropy state of the first is the mode of the stationary distribution of the second, and this equivalence is what allows us to read the gravity model as the equilibrium of a relocation dynamics, and departures from it as dynamical arrest. Figure~\ref{fig1} summarizes the two routes.

Table~\ref{tab:notation} collects the notation in three blocks: symbols shared by both descriptions (the OD matrix and its constraints), those specific to the gravity model (multiplicity, entropy, balancing factors), and those specific to the GHSM (microstate, stationary distribution, two-time correlation). Section~\ref{metho:subsec_gravity} presents the gravity model; Sec.~\ref{metho:subsec_ghsm} presents the dynamics.

\begin{table}[t]
  \caption{\label{tab:notation}Notation, grouped by where each symbol is used.}
  \begin{ruledtabular}
  \begin{tabular}{ll}
    \multicolumn{2}{l}{\emph{Shared}} \\
    $\mathbf{f} = \{f_{ij}\}$ & OD matrix (macrostate) \\
    $f_{ij}$ & commuters living at $i$, working at $j$ \\
    $M = \sum_{ij} f_{ij}$ & total number of commuters \\
    $h_i = \sum_j f_{ij}$ & number of homes at $i$ \\
    $w_j = \sum_i f_{ij}$ & number of jobs at $j$ \\
    $t_{ij}$ & travel time between $i$ and $j$ (fixed) \\
    $E[\mathbf{f}]$ & total commuting cost, Eq.~\eqref{eq:energy} \\
    $\gamma$ & cost exponent \\
    $T$, $\beta = 1/T$ & temperature, inverse temperature ($k_B = 1$) \\
    $\mathbf{f}^{*}$ & equilibrium OD matrix, Eq.~\eqref{eq:gravity} \\[3pt]
    \multicolumn{2}{l}{\emph{Gravity model (Sec.~\ref{metho:subsec_gravity})}} \\
    $\Omega(\mathbf{f})$ & multiplicity of macrostate $\mathbf{f}$ \\
    $S(\mathbf{f}) = \ln \Omega(\mathbf{f})$ & configurational entropy \\
    $\mathcal{F}(\mathbf{f}) = E - TS$ & free energy \\
    $C$ & cost budget \\
    $A_i, B_j$ & balancing factors \\
    $F(t)$ & deterrence function \\[3pt]
    \multicolumn{2}{l}{\emph{GHSM (Sec.~\ref{metho:subsec_ghsm})}} \\
    $\sigma$ & microstate: assignment of $M$ labeled commuters \\
    $p(\sigma)$ & stationary distribution over microstates \\
    $\pi(\mathbf{f})$ & stationary distribution over macrostates \\
    $\phi_{\mathrm{norm}}$ & order parameter, Eq.~\eqref{eq:order-parameter} \\
    $C(\tau, t_w)$ & two-time correlation, Eq.~\eqref{eq:two-time} \\
  \end{tabular}
  \end{ruledtabular}
\end{table}

\begin{figure}[t]
\centering
\begin{tikzpicture}[
  font=\footnotesize,
  boxstep/.style={draw=black!50, rounded corners=1pt, align=center,
                inner sep=3pt, text width=3.55cm, minimum height=8mm},
  boxfinal/.style={draw=black!80, semithick, rounded corners=1pt,
                 align=center, inner sep=4pt, text width=7.0cm},
  arrflow/.style={-{Stealth[length=1.6mm]}, black!60}
]
\node[align=center, font=\footnotesize\bfseries] (ha) at (0,0)
  {Gravity model\\ (Wilson, microcanonical)};
\node[align=center, font=\footnotesize\bfseries] (hb) at (4.1,0)
  {GHSM\\ (this work, canonical)};
\node[boxstep] (a1) at (0,-1.0)
  {maximize $S(\mathbf f)=\ln\Omega(\mathbf f)$\\ subject to:};
\node[boxstep] (a2) at (0,-2.15)
  {fixed homes and jobs $h_i, w_j$;\\ fixed total cost $E[\mathbf f]=C$};
\node[boxstep] (a3) at (0,-3.3)
  {$\beta$: Lagrange multiplier\\ (an \emph{output})};
\node[boxstep] (b1) at (4.1,-1.0)
  {home-swap move Eq.~\eqref{eq:swap}\\ (marginals preserved)};
\node[boxstep] (b2) at (4.1,-2.15)
  {$P_{\mathrm{acc}}=\min(1,e^{-\beta\Delta E})$;\\
   $\beta$: bath parameter (an \emph{input})};
\node[boxstep] (b3) at (4.1,-3.3)
  {detailed balance $+$ coarse-\\graining $\Rightarrow$
   $\pi(\mathbf f)\propto e^{-\beta\mathcal F(\mathbf f)}$};
\draw[arrflow] (a1) -- (a2);
\draw[arrflow] (a2) -- (a3);
\draw[arrflow] (b1) -- (b2);
\draw[arrflow] (b2) -- (b3);
\node[boxfinal] (same) at (2.05,-4.75)
  {same equilibrium OD matrix:\quad
   $f^{*}_{ij}=A_i h_i\, B_j w_j\, e^{-\beta t_{ij}^{\gamma}}$\\[2pt]
   equivalence via strict convexity of $\mathcal F$
   [Prop.~\ref{prop:convex}]; the two $\beta$'s identified by
   $\langle E\rangle(\beta)=C$};
\draw[arrflow] (a3.south) -- ($(same.north)+(-1.6,0)$);
\draw[arrflow] (b3.south) -- ($(same.north)+(+1.6,0)$);
\end{tikzpicture}
\caption{Two routes to the equilibrium OD matrix. Wilson's variational construction treats the city as isolated with the total commuting cost fixed exactly (microcanonical); the GHSM couples it to a heat bath at temperature $\beta^{-1}$ (canonical). Left: entropy maximization under fixed homes, jobs, and cost yields the doubly constrained gravity model [Eq.~\eqref{eq:gravity}], with $\beta$ emerging as the multiplier of the cost constraint. Right: the swap move preserves both marginals exactly, and detailed balance plus coarse-graining over microstates yields the canonical distribution [Eq.~\eqref{eq:canonical}], whose mode is the same gravity state. Strict concavity of the entropy density guarantees ensemble equivalence in the large-$M$ limit \cite{touchette2015}, and the two $\beta$'s coincide through $\langle E\rangle(\beta) = C$.}
\label{fig1}
\end{figure}

\subsection{Doubly constrained gravity model}
\label{metho:subsec_gravity}

\subsubsection{Fixed homes and jobs}
\label{metho:subsubsec_constraints}
The observable state of the system is the integer OD matrix $\mathbf{f} = \{f_{ij}\}$, where $f_{ij}$ counts the commuters living at location $i$ and working at location $j$. Every admissible $\mathbf{f}$ satisfies
\begin{equation}
  \sum_j f_{ij} = h_i \quad \text{(homes)}, \qquad \sum_i f_{ij} = w_j \quad \text{(jobs)},
  \label{eq:marginals}
\end{equation}
that is, the row sums and column sums of $\mathbf{f}$ are fixed. Mathematically, Eq.~\eqref{eq:marginals} restricts the state space to matrices with prescribed marginals; physically, it states that the housing stock and the number of jobs at each location are fixed by the data, and only the matching between homes and workplaces can change.

\subsubsection{Total cost and the deterrence function}
\label{metho:subsubsec_deterrence}
Each configuration is assigned an energy-like total commuting cost
\begin{equation}
  E[\mathbf{f}] \;=\; \sum_{ij} f_{ij}\, t_{ij}^{\gamma}, \qquad \gamma > 0,
  \label{eq:energy}
\end{equation}
the sum over all commuters of a power of their travel time. We emphasize that $E$ plays the role of an energy in the formalism below but is not a physical energy; it is the aggregate cost that the population implicitly economizes.

The power-law form of the per-trip cost $t^{\gamma}$ is motivated by a long-standing empirical question. In gravity models, the suppression of long trips is encoded in a deterrence function $F(t)$, the factor by which the flow between two locations decreases with their separation, $f_{ij} \propto F(t_{ij})$ at fixed marginals. Two functional forms dominate the literature: the exponential $F(t) = e^{-\beta t}$ and the power law $F(t) = t^{-\alpha}$, and which of the two better describes empirical mobility has been debated since the model's inception \cite{fotheringham1989,barthelemy2011,lenormand2016}. The cost \eqref{eq:energy} resolves this dichotomy by nesting both: as shown below, it generates the stretched-exponential deterrence
\begin{equation}
  F(t) \;=\; e^{-\beta t^{\gamma}}.
  \label{eq:deterrence}
\end{equation}
This form reproduces the observed decay of commuter flow with travel time more closely than either the power law or the simple exponential [Fig.~\ref{fig3}(a)]; the comparison is carried out in Sec.~\ref{metho:subsubsec_poisson}.

At $\gamma = 1$, Eq.~\eqref{eq:deterrence} reduces directly to the exponential form $F(t) = e^{-\beta t}$. The power-law form emerges in the opposite limit $\gamma \to 0$. Writing $t^{\gamma} = e^{\gamma \ln t}$ and expanding the exponent to first order in $\gamma$,
\begin{equation}
  t^{\gamma} \;=\; 1 + \gamma \ln t + O(\gamma^{2}),
  \label{eq:taylor}
\end{equation}
the deterrence function becomes
\begin{equation}
  F(t) \;=\; e^{-\beta\left[1 + \gamma \ln t + O(\gamma^{2})\right]} \;=\; e^{-\beta}\, t^{-\beta\gamma}\, \bigl[1 + O(\gamma^{2})\bigr] \;\propto\; t^{-\alpha}, \qquad \alpha = \beta\gamma,
  \label{eq:power-limit}
\end{equation}
a pure power law with exponent $\alpha$. Strictly, this limit requires $\beta \to \infty$ together with $\gamma \to 0$ so that $\alpha = \beta\gamma$ stays finite; the prefactor $e^{-\beta}$ is absorbed by the balancing factors. For $0 < \gamma < 1$ the stretched exponential lies between the two classical forms: power-law-like at intermediate $t$, exponentially bounded in the tail.

The exponent $\gamma$ also has a natural interpretation: $F(t) = e^{-\beta t^{\gamma}}$ is the Boltzmann weight of a single trip with perceived cost $t^{\gamma}$, so $\gamma$ measures how nonlinearly commuters perceive travel time. $\gamma = 1$ means each minute counts equally; $\gamma < 1$ means extra minutes on an already long trip weigh less, consistent with the psychophysical power law of perception \cite{stevens1957}. The choice of deterrence form thus reduces to measuring a single exponent; the calibrated values of $\gamma$, which fall between 0 and 1 for all six cities, are reported in Sec.~\ref{sec:calibration-results}.

\subsubsection{Maximum-entropy solution}
\label{metho:subsubsec_entropy}
The number of microstates, that is, assignments of the $M$ labeled commuters, realizing a given macrostate $\mathbf{f}$ is the multinomial count
\begin{equation}
  \Omega(\mathbf{f}) \;=\; \frac{M!}{\prod_{ij} f_{ij}!},
  \label{eq:multiplicity}
\end{equation}
and the configurational entropy is its logarithm \cite{boltzmann1877,jaynes1957}, which by Stirling's approximation reads
\begin{equation}
  S(\mathbf{f}) \;=\; \ln \Omega(\mathbf{f}) \;\simeq\; -\sum_{ij} f_{ij} \ln f_{ij} + \text{const},
  \label{eq:stirling}
\end{equation}
the constant $M \ln M - M$ being fixed by $\sum_{ij} f_{ij} = M$. Wilson's construction \cite{wilson1967,wilson1970} selects the macrostate of maximal entropy subject to the marginals \eqref{eq:marginals} and a fixed cost budget $E[\mathbf{f}] = C$. Introducing Lagrange multipliers $\lambda_i$, $\mu_j$, and $\beta$ for the three constraints, stationarity of the Lagrangian requires
\begin{equation}
  -\ln f_{ij} - 1 - \lambda_i - \mu_j - \beta t_{ij}^{\gamma} = 0,
  \label{eq:stationarity-wilson}
\end{equation}
whose solution is the doubly constrained gravity model,
\begin{equation}
  f^{*}_{ij} \;=\; A_i h_i \, B_j w_j \, e^{-\beta t_{ij}^{\gamma}},
  \label{eq:gravity}
\end{equation}
with balancing factors fixed by the marginals,
\begin{equation}
  A_i = \Bigl[\sum_j B_j w_j\, e^{-\beta t_{ij}^{\gamma}}\Bigr]^{-1}, \qquad B_j = \Bigl[\sum_i A_i h_i\, e^{-\beta t_{ij}^{\gamma}}\Bigr]^{-1},
  \label{eq:balancing}
\end{equation}
which we solve by iterative proportional fitting (IPF) \cite{furness1965,deming1940}; the iteration converges to the unique fixed point for a strictly positive kernel \cite{sinkhorn1967}. In this construction $\beta$ is an output, tuned until the budget $E[\mathbf{f}^{*}] = C$ is met.

\subsubsection{Microcanonical and free-energy interpretations}
\label{metho:subsubsec_landau}
Wilson's construction can be read as microcanonical \cite{sagarra2013,erlander1990}: the system is isolated, with the total cost fixed exactly, $E[\mathbf{f}] = C$, alongside the hard marginals \eqref{eq:marginals}. The admissible microstates form a constant-cost shell over which the ensemble is uniform; all variability is fluctuation \emph{within} the shell, and the gravity state $\mathbf{f}^{*}$ is the macrostate occupying its overwhelming majority. No temperature is imposed from outside: $\beta$ arises internally as the multiplier conjugate to the cost constraint, just as microcanonical inverse temperature is defined by $\beta = \partial S/\partial E$ rather than by a bath.

This variational problem admits an equivalent free-energy formulation \cite{callen1985,erlander1990}. Define the Landau free energy of a macrostate,
\begin{equation}
  \mathcal{F}(\mathbf{f}) \;=\; E[\mathbf{f}] - T\, S(\mathbf{f}),
  \label{eq:free-energy}
\end{equation}
with $T = 1/\beta$ and $S(\mathbf{f}) = \ln \Omega(\mathbf{f})$ from Eq.~\eqref{eq:stirling}. It is a Landau free energy in the sense that it is a function of the macrostate $\mathbf{f}$ itself, so its landscape over the constraint set can be examined directly. Maximizing $S$ at fixed $E$ is equivalent to minimizing $\mathcal{F}$ at fixed $T$, so the gravity state $\mathbf{f}^{*}$ is the minimum of the free energy: the compromise between low total cost and high multiplicity. Section~\ref{sec:exact-results} shows that $\mathcal{F}$ is strictly convex on the constraint set, so this minimum is unique and global.

\subsection{The Gravity-based Home Swapping Model}
\label{metho:subsec_ghsm}
The Gravity-based Home Swapping Model (GHSM) is a Metropolis dynamics whose elementary move is a \emph{home swap}: two commuters $a$ and $b$ are drawn uniformly at random, and their home locations are exchanged while their workplaces are kept,
\begin{equation}
  (i_a, j_a),\, (i_b, j_b) \;\longrightarrow\; (i_b, j_a),\, (i_a, j_b).
  \label{eq:swap}
\end{equation}
The move conserves every row and column sum of $\mathbf{f}$ by construction, since the two homes and the two workplaces involved are the same before and after, so the constraints \eqref{eq:marginals} hold exactly along the entire trajectory without any projection or correction step. The associated cost change is
\begin{equation}
  \Delta E \;=\; t_{i_b j_a}^{\gamma} + t_{i_a j_b}^{\gamma} - t_{i_a j_a}^{\gamma} - t_{i_b j_b}^{\gamma},
  \label{eq:deltaE}
\end{equation}
and the proposed swap is accepted with the Metropolis probability \cite{metropolis1953}
\begin{equation}
  P_{\mathrm{acc}} \;=\; \min\bigl(1,\, e^{-\beta \Delta E}\bigr).
  \label{eq:metropolis}
\end{equation}
Here $\beta$ is an input, the inverse temperature of the bath to which the city is coupled, in contrast to the gravity route, where it emerges as a Lagrange multiplier. Figure~\ref{fig2}(a) illustrates the move.

The stationary distribution follows from detailed balance. The proposal is symmetric: the probability of drawing the pair $(a,b)$ and proposing the swap equals that of drawing them in the swapped configuration and proposing the reverse. With the acceptance rule \eqref{eq:metropolis}, the transition rates between any two microstates $\sigma$ and $\sigma'$ connected by a swap satisfy
\begin{equation}
  \frac{W(\sigma \to \sigma')}{W(\sigma' \to \sigma)} \;=\; e^{-\beta\,[E(\sigma') - E(\sigma)]},
  \label{eq:detailed-balance}
\end{equation}
which is detailed balance with respect to the Boltzmann distribution over microstates,
\begin{equation}
  p(\sigma) \;\propto\; e^{-\beta E(\sigma)}.
  \label{eq:boltzmann-micro}
\end{equation}
Since the GHSM connects any two microstates compatible with the marginals through a finite sequence of moves \cite{ryser1957,kleitman1973}, the chain is irreducible on this set, and Eq.~\eqref{eq:boltzmann-micro} is its unique stationary law \cite{newman1999}. The system is therefore a genuine equilibrium system in the statistical-mechanical sense: the dynamics is a canonical-ensemble generator, with no driving, absorbing states, or broken detailed balance.

Projecting onto macrostates, each OD matrix $\mathbf{f}$ is realized by $\Omega(\mathbf{f})$ microstates of equal energy $E[\mathbf{f}]$, so
\begin{equation}
  \pi(\mathbf{f}) \;\propto\; \Omega(\mathbf{f})\, e^{-\beta E[\mathbf{f}]} \;=\; e^{-\beta \mathcal{F}(\mathbf{f})},
  \label{eq:canonical}
\end{equation}
the canonical distribution over OD matrices, written in the second equality using the free energy \eqref{eq:free-energy}. The mode of this distribution is exactly the gravity state $\mathbf{f}^{*}$ of Eq.~\eqref{eq:gravity}: maximizing $\pi(\mathbf{f})$ means maximizing $\ln \Omega(\mathbf{f}) - \beta E[\mathbf{f}]$, whose stationarity condition is identical to Wilson's Eq.~\eqref{eq:stationarity-wilson}, with the bath parameter $\beta$ in place of the Lagrange multiplier. In the large-$M$ limit the distribution concentrates sharply on its mode, so the dynamical and the variational descriptions single out the same equilibrium OD matrix.

Note that the acceptance rule \eqref{eq:metropolis} contains no entropy term: the multiplicity factor $\Omega(\mathbf{f})$ in Eq.~\eqref{eq:canonical} is not imposed but generated by coarse-graining, since exactly $\Omega(\mathbf{f})$ microstates of equal energy realize each $\mathbf{f}$, and it is the same $\Omega$ that the microcanonical route maximizes.

\begin{figure*}[t]
\centering
\begin{tikzpicture}[font=\small,
  zone/.style={circle, draw=black!60, fill=teal!12,
                minimum size=9mm, inner sep=0pt},
  before/.style={-{Stealth[length=2mm]}, black!70, thick},
  after/.style={-{Stealth[length=2mm]}, orange!80!red, thick, dashed},
  swapmark/.style={{Stealth[length=1.6mm]}-{Stealth[length=1.6mm]},
                    black!50, thick}]
\node[anchor=west] at (-0.6,3.1) {(a)};
\node[zone] (ia) at (0.8,2.2) {$i_a$};
\node[zone] (ib) at (0.8,0.2) {$i_b$};
\node[zone] (ja) at (4.6,2.2) {$j_a$};
\node[zone] (jb) at (4.6,0.2) {$j_b$};
\node[anchor=south, black!60] at (0.8,2.85) {\footnotesize homes};
\node[anchor=south, black!60] at (4.6,2.85) {\footnotesize workplaces};
\draw[before] (ia) -- (ja)
  node[midway, above, black!70] {\footnotesize $a$};
\draw[before] (ib) -- (jb)
  node[midway, below, black!70] {\footnotesize $b$};
\draw[after] (ib) -- (ja);
\draw[after] (ia) -- (jb);
\draw[swapmark] (ia.south) to[bend right=35] (ib.north);
\node[anchor=east, black!50] at (0.25,1.2) {\footnotesize swap};
\node[align=center] at (2.7,-1.05) {%
  $P_{\mathrm{acc}}=\min\!\bigl(1,\,e^{-\beta\Delta E}\bigr)$\\[2pt]
  \footnotesize row and column sums preserved};
\begin{scope}[xshift=8.6cm]
\node[anchor=west] at (-0.4,3.1) {(b)};
\draw[black!70, thick, domain=0.5:5.5, smooth, variable=\x]
  plot ({\x}, {0.36*(\x-3.0)^2});
\draw[-{Stealth[length=1.6mm]}, black!40] (0.3,-0.45) -- (5.8,-0.45);
\node[anchor=north] at (3.0,-0.5)
  {\footnotesize configuration space (1D cut)};
\fill[orange!80!red] (1.35,0.98) circle (1.5mm);
\node[anchor=east, align=right] at (1.12,1.35)
  {\footnotesize city\\[-2pt]\footnotesize $(T<T^{*})$};
\draw[orange!80!red, dashed, -{Stealth[length=1.8mm]}]
  (1.62,0.75) to[bend right=12] (2.75,0.08);
\draw[black!70] (3.0,0) circle (1.1mm);
\node[anchor=north west, align=left] at (3.22,0.45)
  {\footnotesize unique minimum\\[-2pt]%
   \footnotesize $=$ gravity equilibrium};
\node[anchor=east] at (5.7,2.5)
  {\footnotesize $\mathcal{F}(\mathbf{f})$ strictly convex};
\end{scope}
\end{tikzpicture}
\caption{%
The GHSM and its convex free-energy landscape. (a)~Elementary move: commuters $a$ and $b$ exchange homes while keeping their workplaces [Eq.~\eqref{eq:swap}]. Solid arrows show the trips before the swap, dashed arrows after. Both homes and both workplaces remain occupied, so all row and column sums of $\mathbf{f}$ are preserved; the swap is accepted with probability $\min(1, e^{-\beta\Delta E})$ [Eq.~\eqref{eq:metropolis}]. (b)~One-dimensional cut through configuration space. $\mathcal{F}(\mathbf{f})$ is strictly convex on the constraint set [Prop.~\ref{prop:convex}], so the single-well form is exact along every chord: a downhill path to the unique gravity equilibrium always exists (dashed arrow), yet for $T < T^{*}$ the GHSM cannot traverse it and the city remains kinetically arrested. The frustration is dynamical, not thermodynamic.}
\label{fig2}
\end{figure*}

\subsection{Reference states and order parameter}
\label{metho:subsec_reference}
This paper tracks transitions among four OD matrices: the observed matrix $\mathbf{f}^{\mathrm{obs}}$, the gravity equilibrium $\mathbf{f}^{*}$, the ground state $\mathbf{f}^{\mathrm{grnd}}$, and the random state $\mathbf{f}^{\mathrm{rand}}$. All four satisfy the same marginals \eqref{eq:marginals} and differ only in how homes are matched to workplaces. The two extremes bracket the possibilities: the ground state is the equilibrium at $T \to 0$, minimal total cost $E$ and maximal order parameter, while the random state is the equilibrium at $T \to \infty$, maximal entropy, highest cost, and vanishing order parameter. The gravity state interpolates between them as $T$ varies, and the observed city sits somewhere in this range; where it sits, and whether the dynamics can move it, is the subject of the Results. The same two extremes are the standard benchmarks in the \emph{excess commuting} literature in urban economics, which asks how much of a city's observed commuting exceeds the theoretical minimum: there, the observed mean commute is placed between the cost-minimizing assignment and a random assignment under the same home and job distributions \cite{hamilton1982,white1988,horner2002}.

\subsubsection{Random state}
\label{metho:subsubsec_random}
The random state is the maximum-entropy OD matrix under the marginals alone, $f_{ij} \propto h_i w_j$. We construct the real-valued solution by IPF of the seed $h_i w_j$, then round it to an integer matrix satisfying both marginals exactly: entries are floored, and the row and column remainders are assigned one unit at a time to the cells with the largest fractional parts among those whose row and column both have unmet remainders. The procedure terminates because $\sum_i h_i = \sum_j w_j = M$.

\subsubsection{Ground state}
\label{metho:subsubsec_ground}
The ground state is the cost-minimizing assignment,
\begin{equation}
  \mathbf{f}^{\mathrm{grnd}} \;=\; \arg\min_{\mathbf{f}} E[\mathbf{f}]
  \label{eq:ground-def}
\end{equation}
subject to the marginals \eqref{eq:marginals}. This is the classical Hitchcock transportation problem \cite{hitchcock1941}, which we solve exactly as a minimum-cost flow with supplies $h_i$, demands $w_j$, and arc costs $t_{ij}^{\gamma}$ \cite{ahuja1993}. Because $t^{\gamma}$ is strictly increasing in $t$, the arc ranking, and in our data the minimizer itself, is the same for all $\gamma > 0$; we verified this across the $\gamma$ grid.

\subsubsection{Order parameter}
\label{metho:subsubsec_order}
The order parameter is the normalized fraction of commuters who live at, or effectively at, their workplace:
\begin{equation}
  \mathcal{O} \;=\; \{(i,j) : i = j \;\text{ or }\; t_{ij} \le t_c\}, \qquad t_c = 5~\mathrm{min},
  \label{eq:ordered-cell}
\end{equation}
\begin{align}
  \phi_{\mathrm{raw}}(\mathbf{f}) &\;=\; \frac{1}{M} \sum_{(i,j) \in \mathcal{O}} f_{ij}, \notag \\
  \phi_{\mathrm{norm}} &\;=\; \frac{\phi_{\mathrm{raw}}(\mathbf{f}) - \phi_{\mathrm{raw}}(\mathbf{f}^{\mathrm{rand}})}{\phi_{\mathrm{raw}}(\mathbf{f}^{\mathrm{grnd}}) - \phi_{\mathrm{raw}}(\mathbf{f}^{\mathrm{rand}})},
  \label{eq:order-parameter}
\end{align}
so that $\phi_{\mathrm{norm}} = 0$ at the random state and $1$ at the ground state. The diagonal is the natural ordered cell because cooling drives commuters toward zero-cost trips, that is, toward living where one works, so the diagonal weight plays the role that the magnetization plays in a ferromagnet: it measures how far the assignment has condensed onto the lowest-cost cells.

The threshold $t_c$ removes a zoning artifact. The bare diagonal fraction depends on the mesh: whether a commuter counts as ``living where they work'' is set by zone size and placement (the modifiable areal unit problem \cite{openshaw1984}), and the intrazonal trip share is known to grow systematically with zone size \cite{viegas2009}. Counting all pairs with $t_{ij} \le t_c$ makes $\phi_{\mathrm{raw}}$ insensitive to how the mesh cuts short trips; the excess-commuting literature applies the same correction to intrazonal commutes \cite{hu2015}. Both reference values in Eq.~\eqref{eq:order-parameter} are computed with the same ordered cell, so the normalization is consistent. Results are insensitive to $t_c$ within \hl{$t_c = 5$ and $10$ min (verify against runs)}.

\subsection{Parameter calibration}
\label{metho:subsec_calibration}

\begin{figure*}[htpb]
\centering
\includegraphics[width=0.98\linewidth]{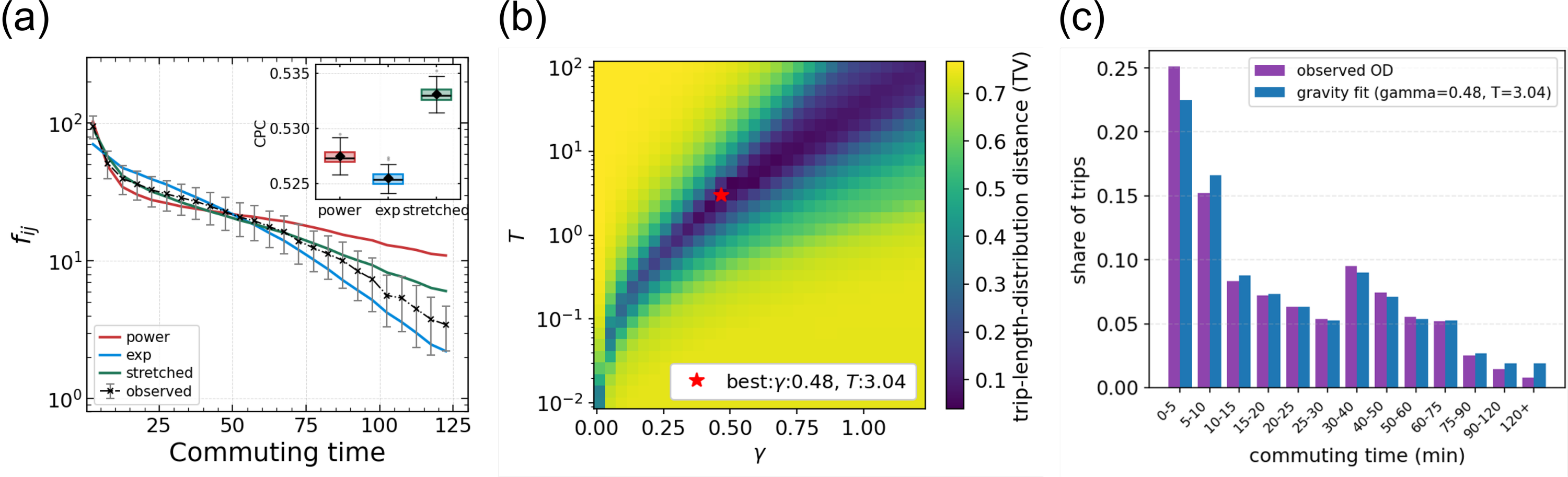}
\caption{\label{fig3}Calibration of the deterrence function, shown for the Tokyo metropolitan area. (a)~Mean commuter flow $f_{ij}$ versus commuting time $t_{ij}$ (5-minute intervals; error bars, interquartile range), compared with the doubly constrained gravity model \eqref{eq:gravity} fitted by Poisson maximum likelihood under the three deterrence kernels. The stretched exponential tracks the observed decay over the full range, while the power law overestimates and the simple exponential underestimates the flow beyond 75 minutes. Inset: common part of commuters (CPC) over $10^{2}$ bootstrap resamples; diamonds, full-sample values. Fitted parameters and model comparison for all six metropolitan areas are given in Appendix~\ref{app:sixcities} (Tables~\ref{tab:pois_fit} and \ref{tab:model_comparison}). (b)~Total-variation distance $D_{\mathrm{tld}}$ between observed and model trip-length distributions over the $(\gamma, T)$ plane; the valley selects $\gamma^{*} = 0.48$, $T^{*} = 3.04$ (star). The uniform region at large $\gamma$ and small $T$ is the frozen ($T \to 0$) limit, where the equilibrium collapses onto the minimal-cost assignment and the distance becomes constant. (c)~Observed and model trip-length distributions at the best-fit parameters of panel~(b).}
\end{figure*}

The cost specification \eqref{eq:energy} carries two unknowns: the functional form of the deterrence, and the parameter pair $(\gamma, T)$ that fixes it. We settle them in this order. Section~\ref{metho:subsubsec_poisson} tests the functional form against the observed OD matrix, since the form determines whether the energy \eqref{eq:energy} is well posed at all. Section~\ref{metho:subsubsec_heatbath} then fixes $(\gamma, T)$ by asking which heat bath reproduces the observed distribution of commuting times. The two procedures use different objectives and different observables, and they select mutually consistent parameters.

\subsubsection{Validation of the cost form}
\label{metho:subsubsec_poisson}
The energy \eqref{eq:energy} assumes a stretched-exponential deterrence, $g(t) = t^{\gamma}$ with $0 < \gamma < 1$. We test this assumption against the two standard alternatives, $g(t) = t$ (exponential) and $g(t) = \ln t$ (power law), by Poisson regression within the doubly constrained framework: $f_{ij} \sim \mathrm{Poisson}(\lambda_{ij})$ with $\lambda_{ij} = a_i h_i\, b_j w_j\, e^{-\beta g(t_{ij})}$, where the balancing factors $a_i, b_j$ are profiled out exactly by IPF. The parameter $\beta$ is obtained by maximizing the profile likelihood, and $\gamma$ by a grid with $\beta$ refitted at each value. Models are compared by AIC together with the common part of commuters,
\begin{equation}
  \mathrm{CPC} \;=\; \frac{\sum_{ij} \min\bigl(f^{\mathrm{obs}}_{ij},\, f^{\mathrm{model}}_{ij}\bigr)}{M},
  \label{eq:cpc}
\end{equation}
with $M = \sum_{ij} f^{\mathrm{obs}}_{ij}$ the total number of commuters; CPC is the fraction of commuters correctly placed by the model, ranging from 0 (disjoint matrices) to 1 (identical) \cite{lenormand2016}. Figure~\ref{fig3}(a) shows the outcome for Tokyo: the stretched exponential follows the observed decay of $f_{ij}$ with $t_{ij}$ over the full range of commuting times, whereas the power law overestimates and the exponential underestimates the long-time tail. The same ordering holds in all six metropolitan areas, where the stretched exponential attains the highest CPC and the lowest AIC, with fitted exponents confined to $\gamma \in [0.40, 0.50]$ (Appendix~\ref{app:sixcities}, Fig.~\ref{app_fig_poisson}). The cost function \eqref{eq:energy} is therefore supported by the data, and the interior exponent $0 < \gamma < 1$ is the empirical fact that the convexity results of Sec.~\ref{sec:exact-results} require.

\subsubsection{Heat-bath calibration of $(\gamma, T)$}
\label{metho:subsubsec_heatbath}
With the cost form fixed, the calibration answers a single question: which heat bath does the real city correspond to? Each parameter pair $(\gamma, T)$ defines one gravity equilibrium, computed as follows. Given $\gamma$ and $\beta = 1/T$, the deterrence kernel $e^{-\beta t_{ij}^{\gamma}}$ is fixed, and the balancing factors $A_i, B_j$ are obtained by iterating Eq.~\eqref{eq:balancing} to convergence with the observed marginals $h_i, w_j$; substituting them into Eq.~\eqref{eq:gravity} yields the equilibrium matrix $\mathbf{f}^{\mathrm{grav}}(\gamma, T)$. In physical terms, $\mathbf{f}^{\mathrm{grav}}(\gamma, T)$ is the OD matrix the city would settle into if equilibrated by the GHSM in contact with a bath at temperature $T$ under cost exponent $\gamma$. Sweeping $(\gamma, T)$ over a grid, linear in $\gamma$ and logarithmic in $T$, generates a two-parameter family of candidate equilibria, and we select the pair whose equilibrium lies closest to the observed matrix $\mathbf{f}^{\mathrm{obs}}$.

Closeness is measured on the trip-length distribution. Partition the commuting times into $B$ consecutive intervals $b = 1, \dots, B$ of 5 minutes, write $\mathcal{B}_b = \{(i,j) : t_{ij} \in b\}$ for the set of origin--destination pairs falling in interval $b$, and define the share of trips in that interval,
\begin{equation}
  p_b[\mathbf{f}] \;=\; \frac{1}{M} \sum_{(i,j) \in \mathcal{B}_b} f_{ij},
  \qquad \sum_{b=1}^{B} p_b[\mathbf{f}] = 1 .
  \label{eq:tld-def}
\end{equation}
The vector $\{p_b[\mathbf{f}]\}_{b=1}^{B}$ is the trip-length distribution (TLD) of the matrix $\mathbf{f}$: the projection of the OD matrix onto the travel-time axis. Writing $p^{\mathrm{obs}}_b = p_b[\mathbf{f}^{\mathrm{obs}}]$ and $p^{\mathrm{grav}}_b(\gamma, T) = p_b[\mathbf{f}^{\mathrm{grav}}(\gamma, T)]$, the calibration objective is the total-variation distance between the two distributions,
\begin{equation}
  D_{\mathrm{tld}}(\gamma, T) \;=\; \frac{1}{2} \sum_{b=1}^{B} \bigl|\, p^{\mathrm{obs}}_b - p^{\mathrm{grav}}_b(\gamma, T) \,\bigr| \;\in\; [0, 1],
  \label{eq:tld-objective}
\end{equation}
where the factor $1/2$ sets the range from 0 (identical distributions) to 1 (disjoint support). This is the standard calibration target for gravity models \cite{hyman1969}. The working parameters are $(\gamma^{*}, T^{*}) = \arg\min_{(\gamma,T)} D_{\mathrm{tld}}(\gamma, T)$.

We considered one alternative, the elementwise $L_1$ distance
\begin{equation}
  D_{1}(\gamma, T) \;=\; \frac{1}{2M} \sum_{ij} \bigl|\, f^{\mathrm{obs}}_{ij} - f^{\mathrm{grav}}_{ij}(\gamma, T) \,\bigr|,
  \label{eq:d1-objective}
\end{equation}
the share of commuters that would have to be reassigned to turn one matrix into the other. We adopt $D_{\mathrm{tld}}$ as the primary objective for two reasons. First, the parameters enter the model only through the deterrence factor $e^{-\beta t^{\gamma}}$, and the TLD is exactly the projection of the OD matrix onto the travel-time axis, so $D_{\mathrm{tld}}$ measures the model where its parameters act. Second, $D_{1}$ sums absolute errors over all $\sim N^2$ cells, most of which carry very small flows, so it is dominated by the sparse periphery of the matrix and its minimum in $(\gamma, T)$ is shallow. We retain $D_{1}$ as a consistency check; its landscape and the residual decomposition at the optimum are given in SM Sec.~S3.

Figure~\ref{fig3}(b) shows $D_{\mathrm{tld}}$ over the $(\gamma, T)$ plane for Tokyo. The minimum lies in a narrow valley running diagonally across the plane, along which higher $\gamma$ is compensated by higher $T$: the two parameters enter the kernel only through the combination $\beta t^{\gamma}$, so a stiffer cost function is offset by a hotter bath. The valley locates $\gamma^{*} = 0.48$ and $T^{*} = 3.04$. The plateau at large $\gamma$ and small $T$ is the frozen limit $T \to 0$, where the equilibrium collapses onto the minimal-cost assignment and $D_{\mathrm{tld}}$ stops responding to the parameters. At the selected values the model reproduces the empirical trip-length distribution across all commuting-time bins [Fig.~\ref{fig3}(c)]. The likelihood fit of Sec.~\ref{metho:subsubsec_poisson} returns $\gamma = 0.40$ and $T = 2.96$ for the same city, so the two estimators, built on different objectives, land in the same region of the parameter plane \hl{(verify)}. We adopt the distribution-based parameters throughout, since their objective is the same observable that the relocation dynamics of Sec.~\ref{sec:results} acts on.

\subsection{Simulation protocols}
\label{metho:subsec_protocols}
All runs use the elementary move \eqref{eq:swap} with the acceptance rule \eqref{eq:metropolis}. Time is measured in Monte Carlo sweeps, one sweep being $M$ attempted swaps, so that a sweep corresponds on average to one attempted relocation per commuter. For the largest system studied (Tokyo, $M = 14{,}906{,}003$ commuters across $N = \hl{10{,}300}$ grid cells), one sweep is $1.49 \times 10^{7}$ attempted moves.

Two protocol lengths are used. Relaxation and ageing (Sec.~\ref{result:subsec_ergodicity_age}) use $10^{12}$ steps, or $6.7 \times 10^{4}$ sweeps, since the two-time correlation $C(\tau, t_w)$ requires waiting and lag times spanning several decades within one trajectory. Other parts use $10^{10}$ steps, or $671$ sweeps.

 Defining an arrest boundary against a fixed observation window is standard for systems without a thermodynamic transition: the laboratory glass transition temperature is itself set by a conventional relaxation time rather than by a singularity \cite{Angell1995,BerthierBiroli2011}.

\section{Results}
\label{sec:results}

\begin{figure*}[htbp]
\centering
\includegraphics[width=0.98\linewidth]{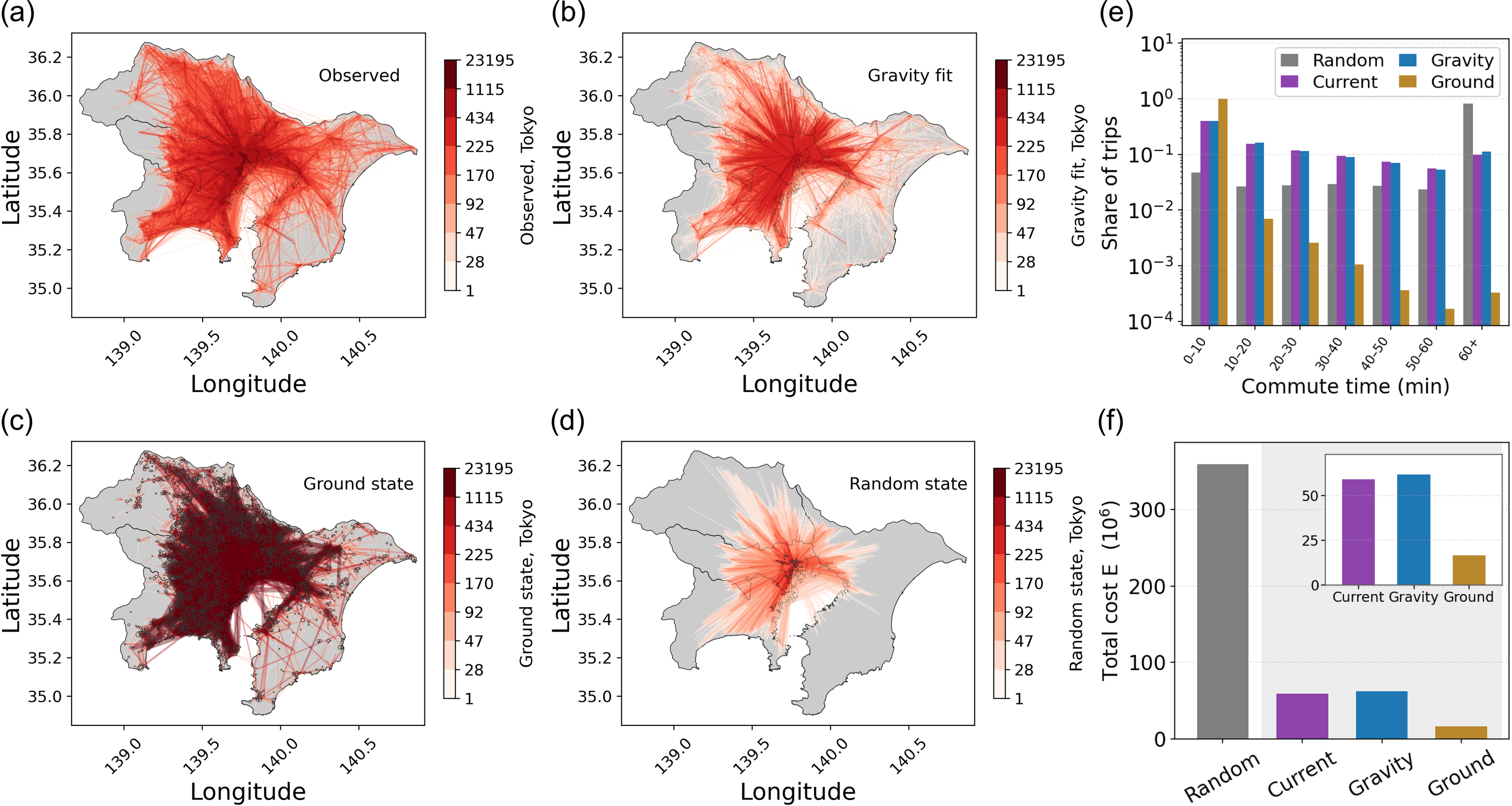}
\caption{\textbf{Four reference states of the commuting
origin--destination (OD) matrix in the Tokyo metropolitan area.}
(a)~Share of trips in each commute-time bin (logarithmic scale) for the current (observed), gravity, random, and ground states. The current and gravity states are nearly indistinguishable; the random state is dominated by long commutes ($>$60~min), while the ground state concentrates almost all trips below 10~min.
(b)~Total commuting cost $E$ of the four states; the inset magnifies the current, gravity, and ground states. The current state lies close to the gravity fit, roughly a factor of six below the random state, but remains well above the ground state. (c)--(f)~Spatial visualization of the commuting flows $f_{ij}$: (c)~the observed OD matrix; (d)~the doubly constrained gravity model, computed with the parameters obtained in Fig.~\ref{fig3}; (e)~the cost-minimizing ground state, obtained by solving the Hitchcock transportation problem exactly as a minimum-cost flow [Eq.~\eqref{eq:ground-def}]; (f)~the maximum-entropy random state, constructed by iterative proportional fitting of the seed $h_i w_j$ and integer rounding. Line color encodes flow volume on a logarithmic scale; in (e), circles mark workplace zones sized by employment $w_j$.}
\label{result_fig1}
\end{figure*}

These four states anchor all subsequent analyses, which examine how the relocation dynamics connect them over time. Two questions organize the results. First, how does the current OD matrix relate to the gravity state? Since the doubly constrained gravity model is the maximum-entropy configuration at the fitted cost level---equivalently, the unique minimizer of the free energy $F[\mathbf{f}]$, as established in Sec.~\ref{sec:exact-results}---proximity of the empirical OD matrix to the gravity state would indicate that the city has effectively equilibrated, whereas a systematic deviation would signal incomplete relaxation. Figure~\ref{result_fig1}(a,b) shows that the current state is statistically close to the gravity fit in both its commute-time distribution and its total cost, while remaining far from either extreme: about a factor of six below the random state in total cost, yet still well above the ground state. Second, is the dynamics ergodic? We ask whether trajectories initialized from different states---random, ground, and current---converge to a common OD matrix, how the relaxation rate evolves with time, and whether the stationary configuration retains memory of its initial condition, i.e., whether the dynamics is path dependent. As we show below, the answers place real cities in a kinetically arrested regime rather than at equilibrium.

\subsection{Convexity and ensemble equivalence}
\label{sec:exact-results}

The free-energy landscape of the commuting model has a single well, and the two routes of Sec.~\ref{metho:sec_methods} are fully equivalent. By a single well we mean that the Landau free energy is strictly convex on the constraint set---one minimum, no metastable states---so that any failure of the dynamics to equilibrate, observed below, cannot originate from landscape roughness. The convexity and concavity statements (Propositions~\ref{prop:convex} and \ref{prop:concave}) are proved here for this model; the passage from concavity to ensemble equivalence invokes the general theorems of Touchette \cite{touchette2015}, which we apply but do not reprove. Throughout, $\Gamma$ denotes the constraint set: nonnegative real matrices with row sums $h_i$ and column sums $w_j$, the continuous relaxation of the integer matrices of Sec.~\ref{metho:subsec_gravity}; $\Gamma$ is convex and compact.

\begin{proposition}[Strict convexity of the Landau free energy]
\label{prop:convex}
On $\Gamma$, the Landau free energy \eqref{eq:free-energy}, written with the Stirling entropy \eqref{eq:stirling} as
\begin{equation}
  \mathcal{F}(\mathbf{f}) \;=\; \sum_{ij} f_{ij}\, t_{ij}^{\gamma} \;+\; T \sum_{ij} f_{ij}\bigl(\ln f_{ij} - 1\bigr),
  \label{eq:landau-explicit}
\end{equation}
is strictly convex and possesses a unique minimizer $\mathbf{f}^{*}$.
\end{proposition}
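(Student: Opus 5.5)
The plan is to split $\mathcal{F}$ into its two terms and treat them separately on the convex compact set $\Gamma$. The cost term $\sum_{ij} f_{ij} t_{ij}^{\gamma}$ is linear in $\mathbf{f}$, hence convex but not strictly so. All the strict convexity must therefore come from the entropic term $T\sum_{ij}\varphi(f_{ij})$ with $\varphi(x)=x(\ln x-1)$, extended continuously by $\varphi(0)=0$. Since $T>0$ and a linear function plus a strictly convex function is strictly convex, it suffices to show that $\mathbf{f}\mapsto\sum_{ij}\varphi(f_{ij})$ is strictly convex on $\Gamma$.

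For this I will work one variable at a time. On $(0,\infty)$ we have $\varphi''(x)=1/x>0$, so $\varphi$ is strictly convex there. At the boundary point $x=0$, $\varphi$ is continuous with $\varphi(0)=0$ and $\varphi'(x)=\ln x\to-\infty$. Strict convexity on $[0,\infty)$ then follows directly from the chord inequality: for $x\neq y$ in $[0,\infty)$ and $\lambda\in(0,1)$, the open segment between them lies in $(0,\infty)$, and a convex function on a closed interval that is strictly convex on its interior is strictly convex on the whole interval. This is the only point that needs care, since $\varphi$ is not $C^2$ at $0$ and entries of matrices in $\Gamma$ may vanish. I expect this boundary issue to be the main (if modest) obstacle, and I will handle it by the chord argument rather than a Hessian argument.

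Summing over cells: take $\mathbf{f}\neq\mathbf{g}$ in $\Gamma$ and $\lambda\in(0,1)$. Each cell satisfies the convexity inequality $\varphi(\lambda f_{ij}+(1-\lambda)g_{ij})\le\lambda\varphi(f_{ij})+(1-\lambda)\varphi(g_{ij})$. At least one cell has $f_{ij}\neq g_{ij}$, and there the inequality is strict. Summing gives strict convexity of $\mathcal{F}$ on all of $\Gamma$, so no Hessian restriction to the tangent space of the marginal constraints is needed.

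For existence, note that $\Gamma$ is nonempty, since the product matrix $h_iw_j/M$ lies in it. $\Gamma$ is also compact, being closed and bounded by $0\le f_{ij}\le\min(h_i,w_j)$, and $\mathcal{F}$ is continuous on $\Gamma$ because $\varphi$ is continuous at $0$. Weierstrass therefore gives a minimizer. For uniqueness, suppose two minimizers $\mathbf{f}\neq\mathbf{g}$ existed. Their midpoint lies in $\Gamma$ by convexity, and strict convexity would give it a strictly smaller value, a contradiction.

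Finally, I will identify the minimizer with Eq.~\eqref{eq:gravity}. Because $\varphi'(0^+)=-\infty$, the minimizer cannot sit on the face $f_{ij}=0$ when all $h_i,w_j>0$: moving mass into a zero cell along a direction that preserves the marginals (a $2\times2$ swap direction) decreases $\mathcal{F}$ to first order. So $\mathbf{f}^{*}$ is interior, the Lagrange conditions \eqref{eq:stationarity-wilson} hold there, and they yield Eq.~\eqref{eq:gravity}.
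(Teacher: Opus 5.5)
Your proof is correct and takes the same route as the paper: the cost term is linear, and all the strict convexity comes from the separable entropy term, whose diagonal Hessian $T/f_{ij}$ is your $\varphi''(x)=1/x>0$ applied cell by cell. The paper's proof checks positive definiteness only on the interior of $\Gamma$ and leaves several details implicit, which you supply: the chord argument at vanishing entries, the continuity needed for Weierstrass, and the $2\times 2$ swap argument showing that $\mathbf{f}^{*}$ is strictly positive and therefore of gravity form.
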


\begin{proof}
The energy term is linear in $\mathbf{f}$. The entropy term has Hessian $\partial^2 \mathcal{F}/\partial f_{ij}\,\partial f_{kl} = (T/f_{ij})\,\delta_{ik}\delta_{jl}$, diagonal with strictly positive entries on the interior of $\Gamma$, so $\nabla^2 \mathcal{F} \succ 0$ there. A strictly convex function on a convex compact set has a unique minimizer.
\end{proof}

\begin{proposition}[Strict concavity of the entropy density]
\label{prop:concave}
The entropy density
\begin{equation}
  s(e) \;=\; \lim_{M \to \infty} \frac{1}{M} \ln \!\!\sum_{\substack{\mathbf{f} \in \Gamma \\ E[\mathbf{f}] = Me}}\!\! \Omega(\mathbf{f})
  \label{eq:entropy-density}
\end{equation}
is strictly concave on the interior of its domain.
\end{proposition}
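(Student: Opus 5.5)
Since $\ln\Omega$ and the energy constraint both scale with $M$, the plan is to reduce the limit in \eqref{eq:entropy-density} to a finite-dimensional variational problem. Write $\mathbf{f} = M\mathbf{x}$, where $\mathbf{x}$ ranges over the normalized constraint set $\Gamma_1$: nonnegative matrices with row sums $h_i/M$ and column sums $w_j/M$, the marginal shares held fixed as $M\to\infty$. By Stirling \eqref{eq:stirling}, $\frac1M\ln\Omega(M\mathbf{x}) = \eta(\mathbf{x}) + o(1)$, uniformly on $\Gamma_1$, with $\eta(\mathbf{x}) = -\sum_{ij}x_{ij}\ln x_{ij}$. The sum over integer matrices on the shell $\sum_{ij} x_{ij}t_{ij}^\gamma = e$ has at most polynomially many terms in $M$, because the number of cells $N^2$ is fixed. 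The standard Laplace/Boltzmann counting argument, with the shell understood as a thin window $|E/M - e|<\delta$ and $\delta\to0$ after $M\to\infty$, therefore gives the contraction formula
\begin{equation*}
  s(e) \;=\; \max\Bigl\{\eta(\mathbf{x}) : \mathbf{x}\in\Gamma_1,\ \textstyle\sum_{ij} x_{ij} t_{ij}^{\gamma} = e\Bigr\}.
\end{equation*}
The domain of $s$ is the interval $[e_{\min}, e_{\max}]$ of values of the linear functional $\langle \mathbf{t}^\gamma,\mathbf{x}\rangle$ on the polytope $\Gamma_1$. Here $e_{\min}$ is the ground-state cost of Sec.~\ref{metho:subsubsec_ground} divided by $M$.

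The first key step is concavity, which is immediate from the representation. Take $e_1,e_2$ in the interior with maximizers $\mathbf{x}_1,\mathbf{x}_2$ (these exist by compactness and continuity of $\eta$). For $\lambda\in(0,1)$, the point $\mathbf{x}_\lambda=\lambda\mathbf{x}_1+(1-\lambda)\mathbf{x}_2$ lies in $\Gamma_1$, because the constraints are linear, and it has cost $\lambda e_1+(1-\lambda)e_2$. Hence $s(\lambda e_1+(1-\lambda)e_2)\ge \eta(\mathbf{x}_\lambda)$. Strictness comes from the strict concavity of $\eta$, which is Prop.~\ref{prop:convex} with $T$ and the linear term stripped off. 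If $e_1\ne e_2$, then $\mathbf{x}_1\ne\mathbf{x}_2$, so $\eta(\mathbf{x}_\lambda)>\lambda\eta(\mathbf{x}_1)+(1-\lambda)\eta(\mathbf{x}_2)=\lambda s(e_1)+(1-\lambda)s(e_2)$.

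Strict convexity of $x\ln x$ fails only at the boundary in the sense that the Hessian blows up there, but the function is still strictly convex on $[0,\infty)$. So this step needs no interiority assumption on the maximizers.

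A second, dual route would serve as a cross-check and identify $s'(e)=\beta$. Define the log-partition density $\psi(\beta)=\max_{\mathbf{x}\in\Gamma_1}[\eta(\mathbf{x})-\beta\langle\mathbf{t}^\gamma,\mathbf{x}\rangle]$, which is attained at the gravity state \eqref{eq:gravity} by Prop.~\ref{prop:convex}. We have $s=\psi^*$ (Legendre transform). Strict concavity then follows from $\psi$ being differentiable and strictly convex, with $\psi''(\beta)=\mathrm{Var}$ of the cost under the tilted solution. This variance is positive whenever $t^\gamma$ is not constant modulo row and column potentials, i.e.\ whenever $e_{\min}<e_{\max}$.

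The main obstacle is the rigorous passage from the integer-lattice sum in \eqref{eq:entropy-density} to the continuous maximum. Specifically, one must show that every $\mathbf{x}$ on the shell can be approximated by integer matrices $\mathbf{f}/M$ with the exact marginals and cost within the window, with only $o(M)$ loss in $\ln\Omega$. The plan is to perturb $\mathbf{x}$ slightly into the relative interior of $\Gamma_1$ and then round with the same floor-and-redistribute scheme as in Sec.~\ref{metho:subsubsec_random}. This changes each entry by $O(1)$, hence $E/M$ by $O(1/M)$ and $\ln\Omega$ by $O(\ln M)$. With a nonzero window width, exact equality of $E$ is unnecessary. The companion upper bound is just the polynomial count of terms times the maximal summand.
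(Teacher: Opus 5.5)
Your proof is correct and follows the paper's own argument. You identify $s(e)$ with the maximal Stirling entropy on the energy shell, then observe that the mixture of the maximizers at $e_1$ and $e_2$ is admissible at the interpolated energy and has strictly larger entropy by strict concavity of $-\sum x_{ij}\ln x_{ij}$. What you add fills in steps the paper asserts only ``at the exponential scale'' or leaves implicit in its appeal to Proposition~\ref{prop:convex}: the derivation of the contraction formula via a thin energy window, the polynomial count of terms and integer rounding, and the remark that strict convexity of $x\ln x$ holds up to the boundary, so the maximizers need not be interior.
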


\begin{proof}
At the exponential scale, the entropy density at energy $e$ is attained by the most probable macrostate on the shell:
\begin{equation}
  s(e) \;=\; \max_{\substack{\mathbf{f} \in \Gamma \\ E[\mathbf{f}] = Me}} \frac{S(\mathbf{f})}{M}.
  \label{eq:s-as-max}
\end{equation}
Fix two energies $e_1 \neq e_2$ and let $\mathbf{f}_1, \mathbf{f}_2$ be the corresponding maximizers, so that $S(\mathbf{f}_k) = M s(e_k)$ for $k = 1, 2$. For $\lambda \in (0,1)$, define the mixture
\begin{equation}
  \mathbf{f}_\lambda \;=\; (1-\lambda)\,\mathbf{f}_1 + \lambda\,\mathbf{f}_2 .
\end{equation}
We verify three properties of $\mathbf{f}_\lambda$ in turn.

(i) \emph{It satisfies the constraints.} $\Gamma$ is convex, so the convex combination of two of its elements remains in $\Gamma$.

(ii) \emph{It lies on the interpolated energy shell.} $E$ is linear in $\mathbf{f}$, so
\begin{equation}
  E[\mathbf{f}_\lambda] \;=\; (1-\lambda) E[\mathbf{f}_1] + \lambda E[\mathbf{f}_2] \;=\; M e_\lambda, \qquad e_\lambda \equiv (1-\lambda) e_1 + \lambda e_2 .
\end{equation}

(iii) \emph{Its entropy exceeds the interpolated entropy.} $S$ is strictly concave on $\Gamma$ (Proposition~\ref{prop:convex}), so for $\mathbf{f}_1 \neq \mathbf{f}_2$,
\begin{equation}
  S(\mathbf{f}_\lambda) \;>\; (1-\lambda) S(\mathbf{f}_1) + \lambda S(\mathbf{f}_2) \;=\; M\bigl[(1-\lambda) s(e_1) + \lambda s(e_2)\bigr].
\end{equation}

By (i) and (ii), $\mathbf{f}_\lambda$ is an admissible candidate in the maximization \eqref{eq:s-as-max} at energy $e_\lambda$, so $s(e_\lambda) \geq S(\mathbf{f}_\lambda)/M$. Combining with (iii),
\begin{equation}
  s\bigl((1-\lambda) e_1 + \lambda e_2\bigr) \;>\; (1-\lambda)\, s(e_1) + \lambda\, s(e_2),
\end{equation}
which is strict concavity.
\end{proof}

With strict concavity established, ensemble equivalence follows from Ref.~\cite{touchette2015} at all three levels of description: thermodynamic (its Proposition~2, the Legendre--Fenchel pair $s \leftrightarrow \varphi$), macrostate (its Theorem~7, $\mathcal{E}^{e} = \mathcal{E}_{\beta}$ at $\beta = s'(e)$), and measure (its Theorems~10 and 12). Because both marginals are held fixed in addition to the cost, the applicable formulation is the constrained ensemble of its Sec.~6, with the entropy replaced by the marginal-constrained density---precisely the object of Proposition~\ref{prop:concave}. Since $s$ is strictly concave everywhere in the interior, equivalence holds at every accessible energy, with no nonconcave interval and no latent heat: the model is structurally incapable of a first-order transition. Two quantitative consequences follow (Appendix~\ref{app:equiv}): the relative energy fluctuation vanishes,
\begin{equation}
  \frac{\sqrt{\langle \Delta E^2\rangle}}{\langle E \rangle} \;\sim\; \frac{1}{\sqrt{M}} \;\xrightarrow[M\to\infty]{}\; 0,
  \label{eq:fluctuations}
\end{equation}
so at $M \sim 10^{7}$ the heat-bath and fixed-cost descriptions are numerically indistinguishable; and $\beta \mapsto \langle E\rangle(\beta)$ is strictly monotone, so exactly one bath temperature realizes any admissible cost budget, identifying Wilson's Lagrange multiplier with the bath parameter through $\beta = s'(e)$.

\subsection{Ergodicity breaking and ageing}
\label{result:subsec_ergodicity_age}

\begin{figure*}[htbp]
\centering
\includegraphics[width=0.9\linewidth]{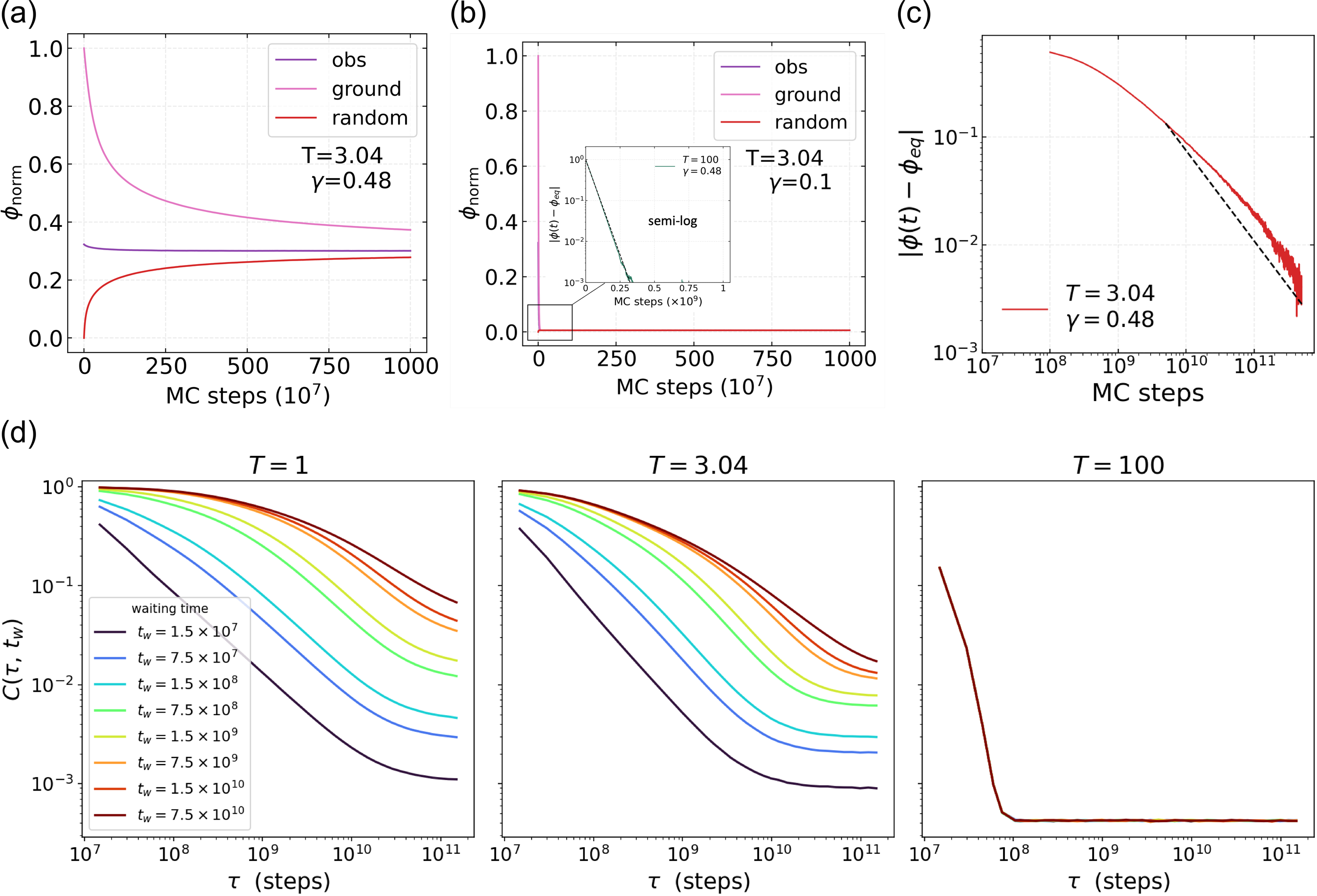}
\caption{\textbf{Ergodicity breaking, power-law relaxation, and aging of the relocation dynamics.} (a)~Order parameter $\phi_{\mathrm{norm}}$ as a function of Monte Carlo steps for trajectories initialized from the observed, ground, and random states at $T=3.04$, $\gamma=0.48$ (the empirically fitted parameters for Tokyo). Over $10^{10}$ steps---by which point any visible evolution has ceased---the three trajectories fail to merge, retaining a clear memory of their initial conditions: ergodicity is broken on all accessible timescales. (b)~Same protocol at $\gamma=0.1$: the three trajectories collapse onto a common value almost immediately, showing that ergodicity is restored at weak cost sensitivity. Inset: at high temperature ($T=100$, $\gamma=0.48$), the distance to equilibrium $|\phi(t)-\phi_{\mathrm{eq}}|$ decays exponentially (straight line on a semi-log scale), the signature of conventional relaxation. (c)~At the fitted parameters ($T=3.04$, $\gamma=0.48$), by contrast, the approach to the stationary value follows a power law over more than three decades in time (dashed line, guide to the eye); resolving this slow decay requires extending the simulation to $\sim 10^{12}$ steps. (d)~Two-time autocorrelation function $C(\tau, t_w)$ for waiting times $t_w = 1.5\times10^{7}$ to $7.5\times10^{10}$ at $T=1$, $T=3.04$, and $T=100$ (runs of $\sim 10^{12}$ steps). At $T=1$ and at the fitted $T=3.04$, the curves shift systematically to longer lag times $\tau$ as $t_w$ increases---the system ages, relaxing ever more slowly the longer it has evolved---whereas at $T=100$ all waiting times collapse onto a single fast-decaying curve, i.e., time-translation invariance is restored and aging is absent.}
\label{result_fig2}
\end{figure*}

\subsubsection{Aging}
\label{result:subsec_aging}

In the arrested region the system ages: the older it is, the slower it relaxes. Aging is diagnosed by the two-time autocorrelation of the residential configuration,

\begin{equation}
  C(\tau, t_w)
  \;=\;
  \Bigl\langle
    \frac{1}{M} \sum_{c=1}^{M}
    \delta\bigl( o_c(t_w + \tau),\, o_c(t_w) \bigr)
  \Bigr\rangle ,
  \label{eq:two-time}
\end{equation}

where $o_c(t)$ is the residence of commuter $c$, $t_w$ the waiting time, and $\tau$ the lag; if $C$ depends only on $\tau$ the dynamics is time-translation invariant and the system is in equilibrium, while a $t_w$-dependence with slower decay at larger $t_w$ is the standard signature of aging \cite{bouchaud1997}. ($C$ relaxes toward the finite overlap $q_{\infty} = \sum_i (h_i/M)^2$, not zero, because the marginals are conserved.) Figure~\ref{result_fig3}(a)--(c) shows the measurement: the curves collapse for all $t_w$ at $T = 100$, and fan out with $t_w$ at $T^{*} = 3.04$ and $T = 1$, with no return to time-translation invariance within the accessible window.

The functional form of the relaxation changes across the same boundary. Figures~\ref{result_fig3}(d)--(h) show the distance $|\phi(t) - \phi_{\mathrm{eq}}|$ from equilibrium: exponential decay with a finite relaxation time on the ergodic side ($\gamma = 0.1$, or $T = 100$), algebraic decay over more than two decades---no characteristic timescale---on the arrested side ($\gamma = 0.48$ and $1.2$ at $T^{*}$, or $T = 1$). The exponential-to-algebraic crossover follows the arrest boundary of Fig.~\ref{result_fig2}(g). Aging is compatible with two causes---a rough landscape, or a simple landscape the dynamics cannot traverse \cite{RitortSollich2003}---and Proposition~\ref{prop:convex} excludes the first: the aging here is kinetic arrest, not a thermodynamic glass transition. For comparison, aging in the Schelling model has been attributed to long-lived structural clusters \cite{abella2022}; that mechanism is not available on a convex landscape.

\begin{figure*}[htbp]
\centering
\includegraphics[width=0.98\linewidth]{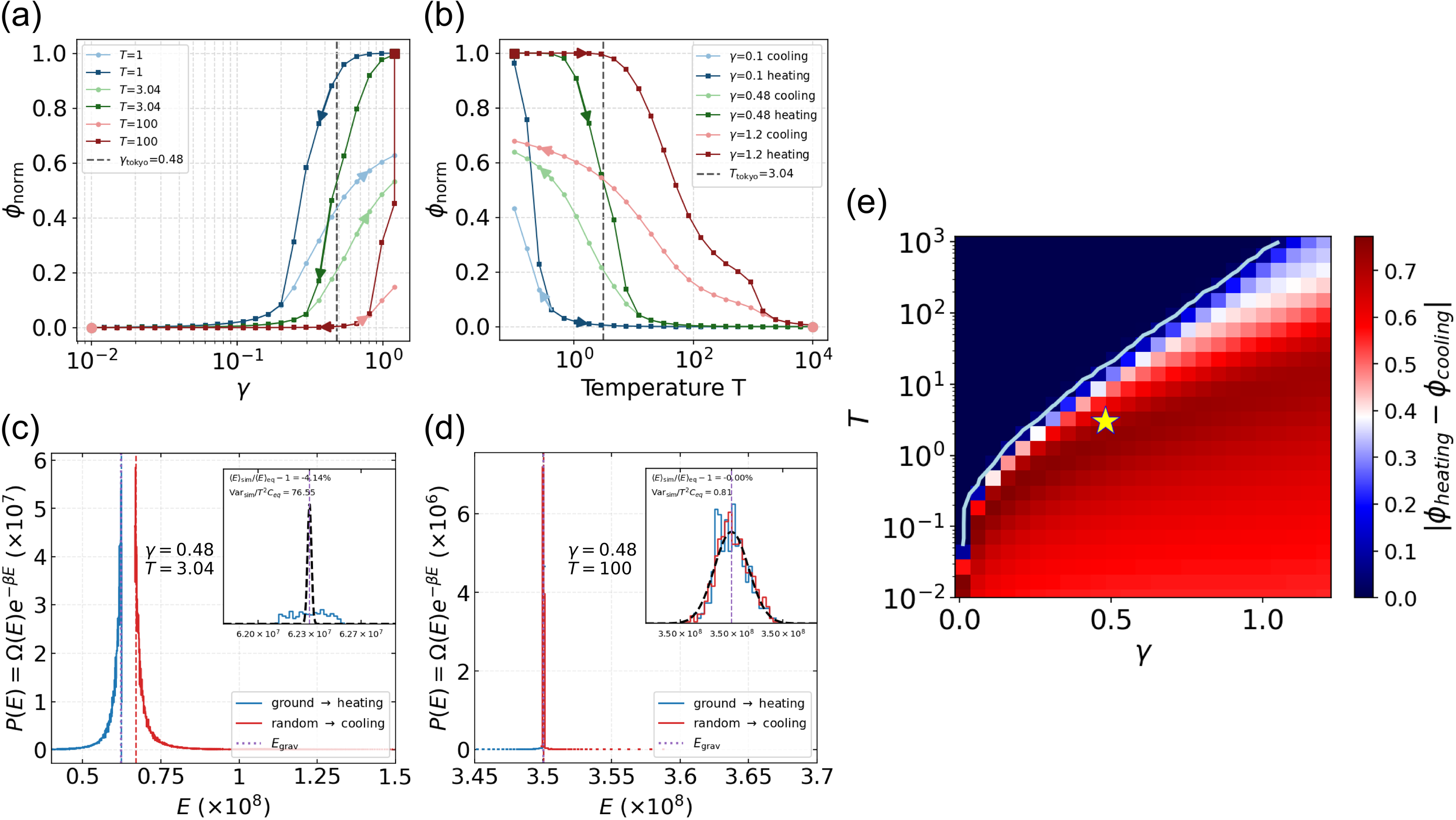}
\caption{Hysteresis and the nonequilibrium phase diagram. (a)~Order parameter $\phi_{\mathrm{norm}}$ under quasistatic sweeps of the cost exponent $\gamma$ at fixed $T=1$, $3.04$, and $100$ ($10^{10}$ Monte Carlo steps per point; arrows indicate sweep direction, circles increasing $\gamma$ and squares decreasing $\gamma$). At $T=1$ and $T=3.04$ the forward and backward branches enclose a pronounced hysteresis loop, whereas at $T=100$ the loop closes: the sweep is reversible. The vertical dashed line marks the fitted Tokyo value $\gamma_{\mathrm{tokyo}}=0.48$; its intersection with the $T=3.04$ branches (green) brackets the region of the order parameter where the present-day city can reside. (b)~Same protocol for temperature sweeps at fixed $\gamma=0.1$, $0.48$, and $1.2$, with the dashed line at $T_{\mathrm{tokyo}}=3.04$; hysteresis widens with increasing $\gamma$ and the fitted Tokyo parameters (green) again fall inside the open loop. (c)~Reweighted energy histograms $P(E)=\Omega(E)e^{-\beta E}$ at the fitted parameters ($T=3.04$, $\gamma=0.48$), sampled after heating from the ground state (blue) and cooling from the random state (red). The two histograms fail to overlap, each rising more slowly than a canonical distribution while the gap between them narrows only gradually: neither branch has reached the Boltzmann ensemble. Inset: canonical-sampling diagnostics---the relative deviation of the sampled mean energy from its equilibrium value, $\langle E\rangle_{\mathrm{sim}}/\langle E\rangle_{\mathrm{eq}}-1$, and the fluctuation ratio $\mathrm{Var}_{\mathrm{sim}}/T^{2}C_{\mathrm{eq}}$, which equals unity in equilibrium by the fluctuation--dissipation relation. Here the variance exceeds the canonical expectation by nearly two orders of magnitude. (d)~Same analysis at $T=100$: the heating and cooling histograms collapse onto each other and onto the canonical form (dashed), with $\langle E\rangle_{\mathrm{sim}}/\langle E\rangle_{\mathrm{eq}}-1\approx 0$ and $\mathrm{Var}_{\mathrm{sim}}/T^{2}C_{\mathrm{eq}}\approx 0.8$, confirming that the dynamics equilibrates at high temperature. (e)~Glass-like phase diagram: the residual hysteresis $|\phi_{\mathrm{heating}}-\phi_{\mathrm{cooling}}|$ across the $(\gamma, T)$ plane separates an ergodic region (dark blue, upper left) from a kinetically arrested region (red, lower right); the light contour marks the $|\phi_{\mathrm{heating}}-\phi_{\mathrm{cooling}}|=0.05$ isoline, which we take as the operational boundary between the two regimes. The star marks the fitted Tokyo parameters, which lie inside the arrested region close to its boundary.}
\label{result_fig4}
\end{figure*}

\begin{figure}[htbp]
\centering
\includegraphics[width=0.7\linewidth]{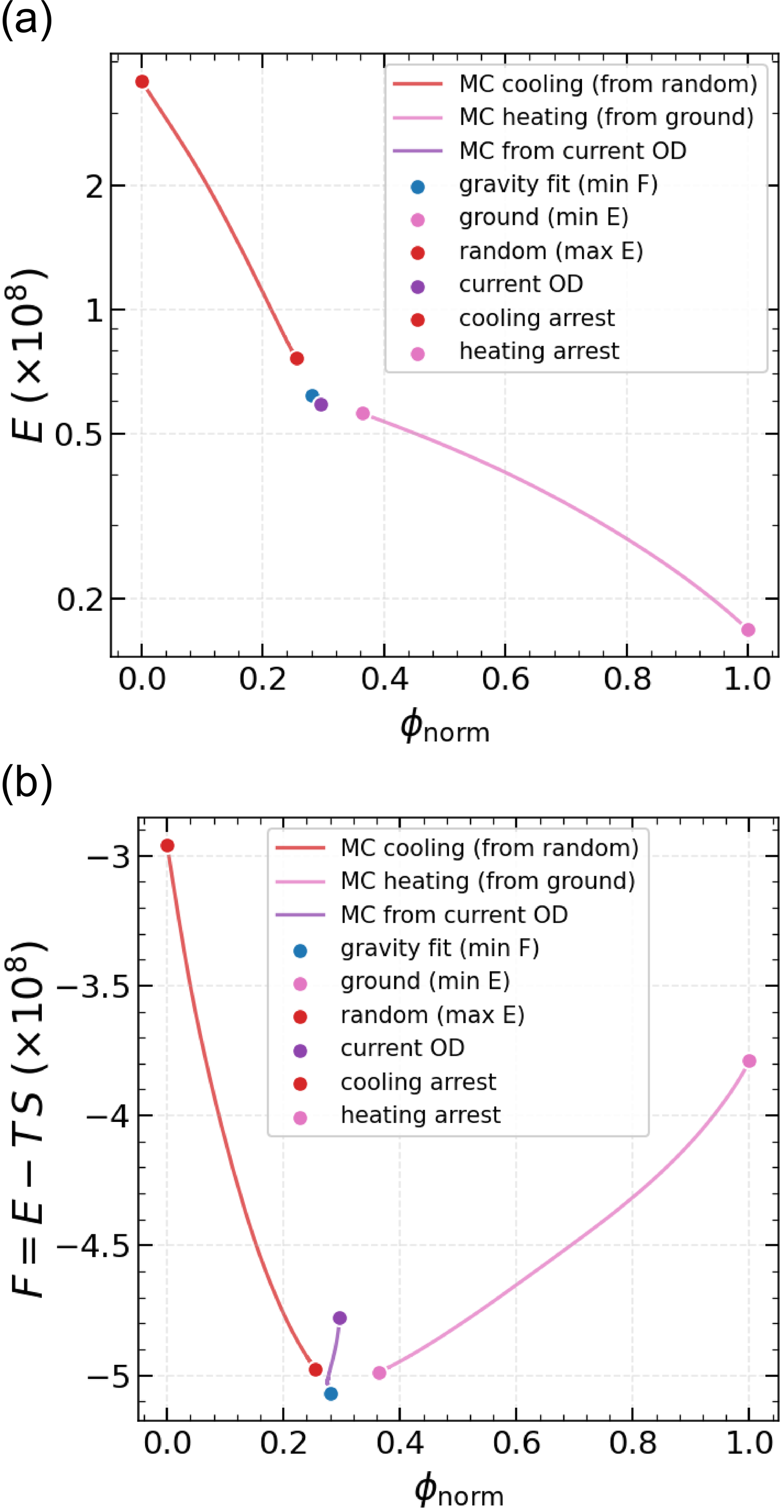}
\caption{\textbf{Energy and free-energy landscapes of the relocation dynamics at the fitted Tokyo parameters ($T=3.04$, $\gamma=0.48$).} (a)~Total commuting cost $E$ along Monte Carlo trajectories, plotted against the order parameter $\phi_{\mathrm{norm}}$. The cooling trajectory (red) starts from the random state ($\phi_{\mathrm{norm}}=0$, maximum $E$) and the heating trajectory (pink) from the ground state ($\phi_{\mathrm{norm}}=1$, minimum $E$); both descend toward the gravity state (blue, the free-energy minimum) but stall at their respective arrest points before reaching it. The current OD matrix (purple) lies between the two arrest points, in the same narrow region of the landscape. (b)~The same trajectories in the free energy $F = E - TS$. Both branches decrease $F$ monotonically along the single convex valley---consistent with the strict convexity established in Sec.~\ref{sec:exact-results}, which rules out competing metastable minima---yet each freezes on its own side of the gravity minimum, leaving a finite free-energy gap. The trajectory initialized from the current OD matrix (purple line) likewise relaxes only marginally. The arrest of all three trajectories short of the unique minimum, despite the absence of any barrier in $F$, identifies the stationarity as kinetic rather than thermodynamic.}
\label{result_fig5}
\end{figure}

\section{Discussion}
\label{sec:discussion}

Our central result is a dissociation between thermodynamics and dynamics in the doubly constrained gravity model. On the thermodynamic side, the Landau free energy is strictly convex on the constraint set: the equilibrium OD matrix is unique, ensemble equivalence holds, and the equilibrium is computable by IPF iteration. The mean energy $\langle E \rangle(T)$ and the heat capacity are smooth over the entire temperature range, including the calibrated point $T^{*}$. On the dynamical side, the marginal-preserving swap dynamics that samples this equilibrium fails to reach it: relaxation stalls at preparation-dependent configurations, cooling and heating produce history-dependent branches, and two-time correlations age. The freezing line $T_{g}(\gamma)$ mapped in the $(T, \gamma)$ plane is a dynamical crossover, dependent on the observation timescale as laboratory glass-transition temperatures depend on cooling rate \cite{Angell1995}, not a thermodynamic phase boundary, which convexity excludes.

This identifies the phenomenon as kinetic arrest rather than a thermodynamic glass. The two canonical scenarios for glassy slowdown---a fractured free-energy landscape with exponentially many metastable states \cite{Parisi1980,KTW1989}, and trivial thermodynamics with arrest generated by the dynamical rules alone \cite{FredricksonAndersen1984,RitortSollich2003}---produce nearly identical phenomenology, and hysteresis, aging, and growing relaxation times do not discriminate between them \cite{BerthierBiroli2011,BiroliGarrahan2013}. In structural glasses the question remains open because the thermodynamic side is accessible only through approximation. Here it is settled by a theorem: the Hessian of the free energy is positive definite, so no metastable-state structure exists to trap the dynamics, and the traps encountered by the swap dynamics are properties of the move set, not of the equilibrium measure. The constraint originates in the hard marginal conservation: with every home occupied, relaxation proceeds only through paired exchanges, as in conserved lattice gases \cite{kob1993}; at low temperature the available exchanges are confined to the nearly degenerate short-trip block, and escape requires chains of suppressed uphill moves. Replacing pair swaps by longer exchange cycles is expected to slow relaxation further, a known property of kinetically constrained dynamics \cite{RitortSollich2003}.

The mechanism has a direct urban reading. The residual between the observed city and its gravity equilibrium is concentrated among short-distance commuters, and as the system approaches equilibrium the accepted rearrangements become increasingly short-range: the long-range reorganization of commuting flows is largely complete, and what separates the real city from its equilibrium is a large number of micro-adjustments among nearby residents. For the Metropolis algorithm these are low-gain proposals; for households they are relocations whose benefit is a few minutes of travel time, below any plausible cost of moving. The arrest resides in this gap between algorithmic and behavioral optimization. The aggregate commuting cost of the city has therefore evolved close to its equilibrium value: further reduction through the relocation of existing residents is dynamically suppressed, and substantial gains require changing the landscape itself---transport investment that reshapes $t_{ij}$, or housing development that reshapes the marginals---rather than waiting for the frozen degrees of freedom to relax.

\paragraph*{Swap moves versus vacancy chains.}
The home-swap move is a device, not a behavioral claim. It is the minimal marginal-preserving move that renders the constrained configuration space connected, and the equilibrium distribution it converges to is independent of the move set, which enters only through the dynamics. Real housing reallocation proceeds through vacancy chains, in which a released unit is claimed by one household, whose vacated unit is claimed by the next \cite{white1970,chase1991}, a mechanism also documented in animal populations \cite{chase1988}. This difference strengthens our conclusion: vacancy-mediated dynamics is more constrained than pairwise exchange, since a move requires an available vacancy in addition to a favorable energy difference, and vacancy-mediated transport is a classic source of slow relaxation in condensed matter. Replacing swaps by chains would lower the mobility further and arrest the system earlier. The freezing we report is a lower bound on the kinetic constraint operating in real cities.

Two extensions are speculative. First, frictions beyond travel time would deepen the arrest: if short-distance commuters disproportionately face non-commuting attachments---tenure, ownership, schooling---the effective barriers for exactly the moves that remain are higher still. Testing this requires linking OD data to tenure and ownership records. Second, because the arrest is kinetic, it can be circumvented by changing the move set rather than the incentives: coordinated exchange mechanisms that execute long chains of relocations as a single transaction, in the spirit of clearinghouse designs for kidney exchange \cite{RothSonmezUnver2004}, correspond to non-local moves that bypass the barriers responsible for freezing. Whether such mechanisms are institutionally feasible for housing is beyond our scope, but the statistical-physics framing makes the design target precise: it is the dynamics, not the free-energy landscape, that pins the city short of its equilibrium.

\section*{Author Contributions}
Y.Y.Z. designed the research plan, developed the data analysis methods, performed the numerical calculations, and wrote the manuscript. H.T. designed the research plan, verified the data analysis methods, and revised the manuscript. M.T. led the project and directed the writing of the manuscript.

\section*{Funding}
This work was supported by the Japan Society for the Promotion of Science, Grant-in-Aid for Scientific Research (B) (GrantNumber 23K22980 to MT). The funder had no role in study design, data collection and analysis, decision to publish, or preparation of the manuscript.

\section*{Data Availability}
The GPS data cannot be shared publicly because they are available only on request from a third party. Data are available from Agoop Corporation, a Japanese private company that provides location big data acquired from smartphone applications. The specific product is ``Pointo-gata ryoudou-jinkou data'' (point-type population data). Researchers who meet the criteria for access to confidential data can visit \url{https://agoop.co.jp/service/dynamic-population-data/} for more information.

Zone definitions use the 1~km national grid squares from the National Land Numerical Information dataset provided by MLIT, freely available under the CC BY 4.0 license \cite{mlit_mesh1000}.

\appendix

\appendix

\section{Data construction}
\label{app:data}

Each GPS record carries a randomized user ID (reset nightly), a timestamp, coordinates (one-minute resolution, spatial accuracy about 10~m), and home and work city codes. Users with fewer than 100 location points per day are excluded; holidays are excluded. Each point is labeled \emph{home}, \emph{work}, \emph{move}, \emph{stroll}, or \emph{stay} by thresholds on speed, dwell time, and grid residence \cite{zheng2024plosone}: \emph{home} is the 100~m cell of the longest nighttime stay (dwell over 4~h), \emph{work} the daytime cell with dwell over 5~h distinct from home. The morning commute runs from the first non-home status after home to the first work status, giving the door-to-door time $T_{\mathrm{commute}}$; $t_{ij}$ is its median over users on the pair $(i,j)$. Known limitations (demographic bias of smartphone users; nightly ID reset) are discussed in Ref.~\cite{zheng2024plosone}. Figure~\ref{app_fig_map} shows the locations of the six metropolitan areas, and Table~\ref{tab:data} summarizes the resulting OD data.

Pairs with no observed trips have no measured $t_{ij}$; we assign them a penalty time of 1200~min, since IPF requires a strictly positive kernel. The penalty is large enough that the gravity fit puts only $\sim 10^{-7}$ of $1.5 millionnprople$ on these pairs, so its exact value affects no calibrated result---only the energy of the random state, which places flow there regardless of travel time.

\begin{figure}[htbp]
\centering
\includegraphics[width=0.85\linewidth]{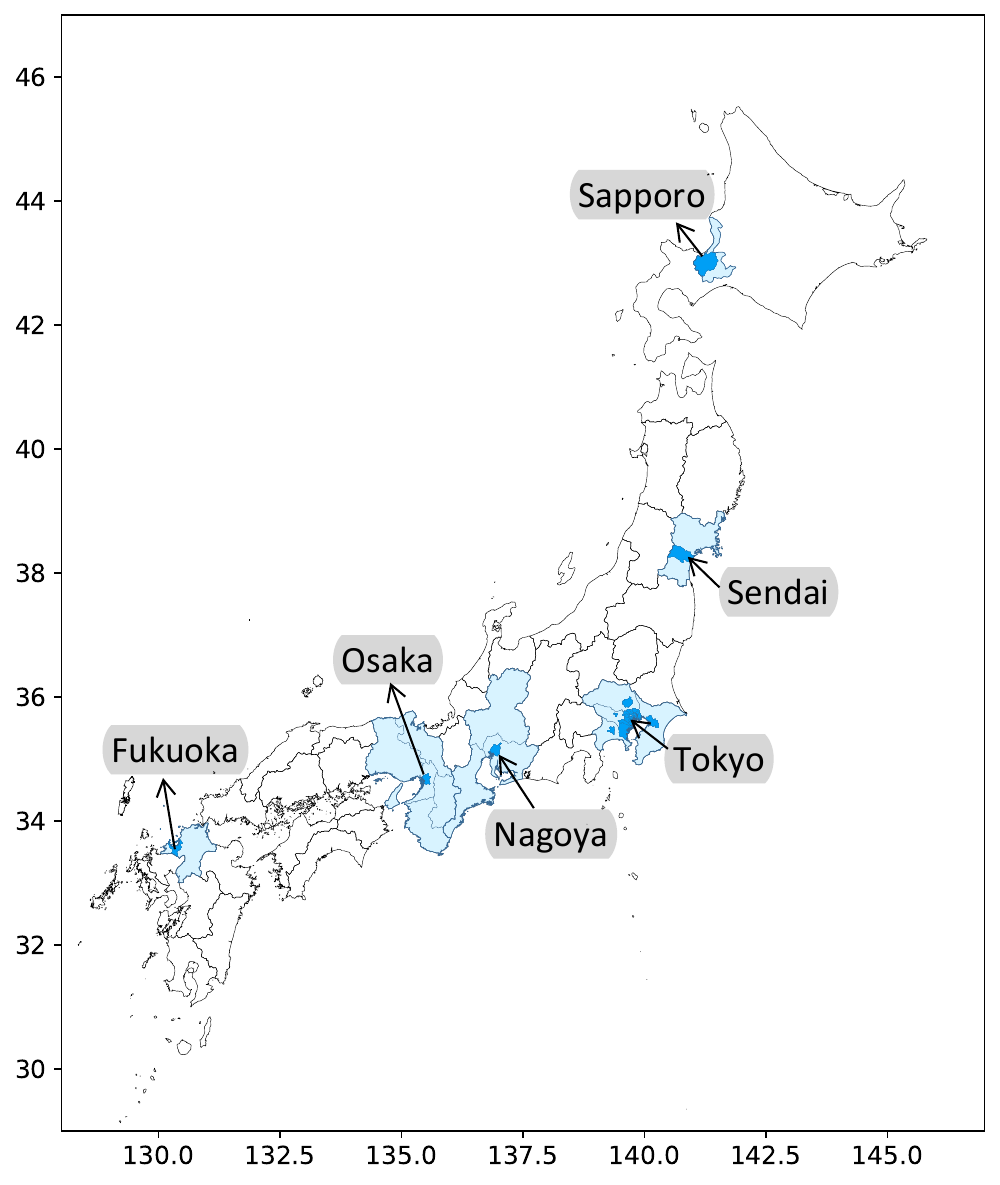}
\caption{%
The six metropolitan areas analyzed in this study---Sapporo, Sendai, Tokyo, Nagoya, Osaka, and Fukuoka---shown on a map of Japan, each with a population above 2 million \cite{estat2024}. \hl{[Suggested: shade each area by its analysis-grid extent and annotate with $N$ and $M$ from Table~\ref{tab:data}.]}}
\label{app_fig_map}
\end{figure}

\begin{table}[htbp]
  \caption{\label{tab:data}Summary of commuting OD data by metropolitan area.
  $N$ is the number of spatial units (grid cells), i.e., the linear dimension
  of the $N \times N$ OD matrix; ``Nonzero OD pairs'' is the number of matrix
  elements with $T_{ij} > 0$; and $M = \sum_{ij} T_{ij}$ is the total number
  of commuters.}
  \begin{ruledtabular}
  \begin{tabular}{lccc}
  Metropolitan area & $N$ & Nonzero OD pairs & $M$ \\
  \hline
  Tokyo   & 9{,}750 & 517{,}196 & 14{,}863{,}988 \\
  Osaka   & 9{,}675 & 274{,}195 & 7{,}014{,}248 \\
  Nagoya  & 8{,}787 & 192{,}189 & 4{,}290{,}435 \\
  Fukuoka & 3{,}018 & 59{,}936 & 1{,}548{,}570 \\
  Sapporo & 8{,}567 & 64{,}249 & 1{,}293{,}554 \\
  Sendai  & 2{,}651 & 26{,}472 & 663{,}977 \\
  \end{tabular}
  \end{ruledtabular}
\end{table}

\section{Results across the six metropolitan areas}
\label{app:sixcities}

The main text presents Tokyo; this appendix collects the corresponding results for the six metropolitan areas. Figure~\ref{app_fig_arrest} shows the arrest diagram of each city with its calibrated point, Fig.~\ref{app_fig_calibration} the calibration landscape from which that point is obtained, and Fig.~\ref{app_fig_poisson} the independent Poisson validation of the deterrence form. In every city the stretched-exponential deterrence is selected with $0 < \gamma^{*} < 1$, and the calibrated point $(\gamma^{*}, T^{*})$ falls inside the arrested region: the conclusions of the main text are not particular to Tokyo. The $L_1$ (matrix-level) calibration objective and its residual decomposition for all six cities are provided in SM Sec.~S3, and the complete dynamical analysis of Sendai in SM Sec.~S4.

\begin{figure*}[htbp]
\centering
\includegraphics[width=0.98\linewidth]{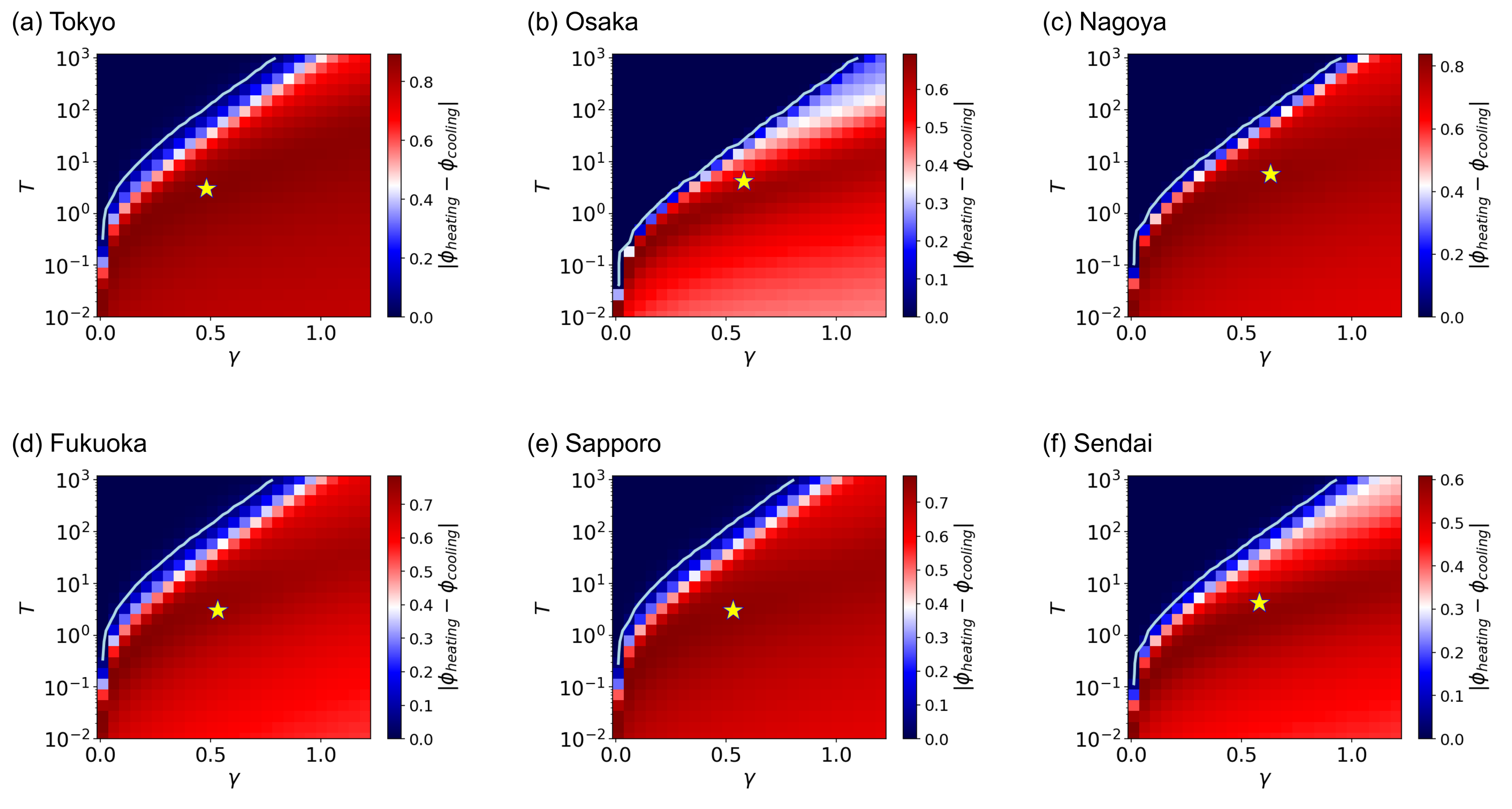}
\caption{%
Arrest diagrams of the six metropolitan areas: (a)~Tokyo, (b)~Osaka, (c)~Nagoya, (d)~Fukuoka, (e)~Sapporo, (f)~Sendai. Each panel shows the protocol-history dependence $|\phi_{\mathrm{heating}} - \phi_{\mathrm{cooling}}|$ over the $(\gamma, T)$ plane, measured under the stepped cooling and heating protocols of Appendix~\ref{app:protocols}; the light-blue line marks the crossover separating the ergodic region (dark blue, protocol-independent) from the arrested one (red, history-dependent), and the star marks the calibrated city $(\gamma^{*}, T^{*})$ of Fig.~\ref{app_fig_calibration}. In all six cities the calibrated point lies inside the arrested region. The crossover is defined at the fixed protocol length of Appendix~\ref{app:protocols} and is a dynamical boundary, not a thermodynamic one.}
\label{app_fig_arrest}
\end{figure*}

\begin{figure*}[htbp]
\centering
\includegraphics[width=0.98\linewidth]{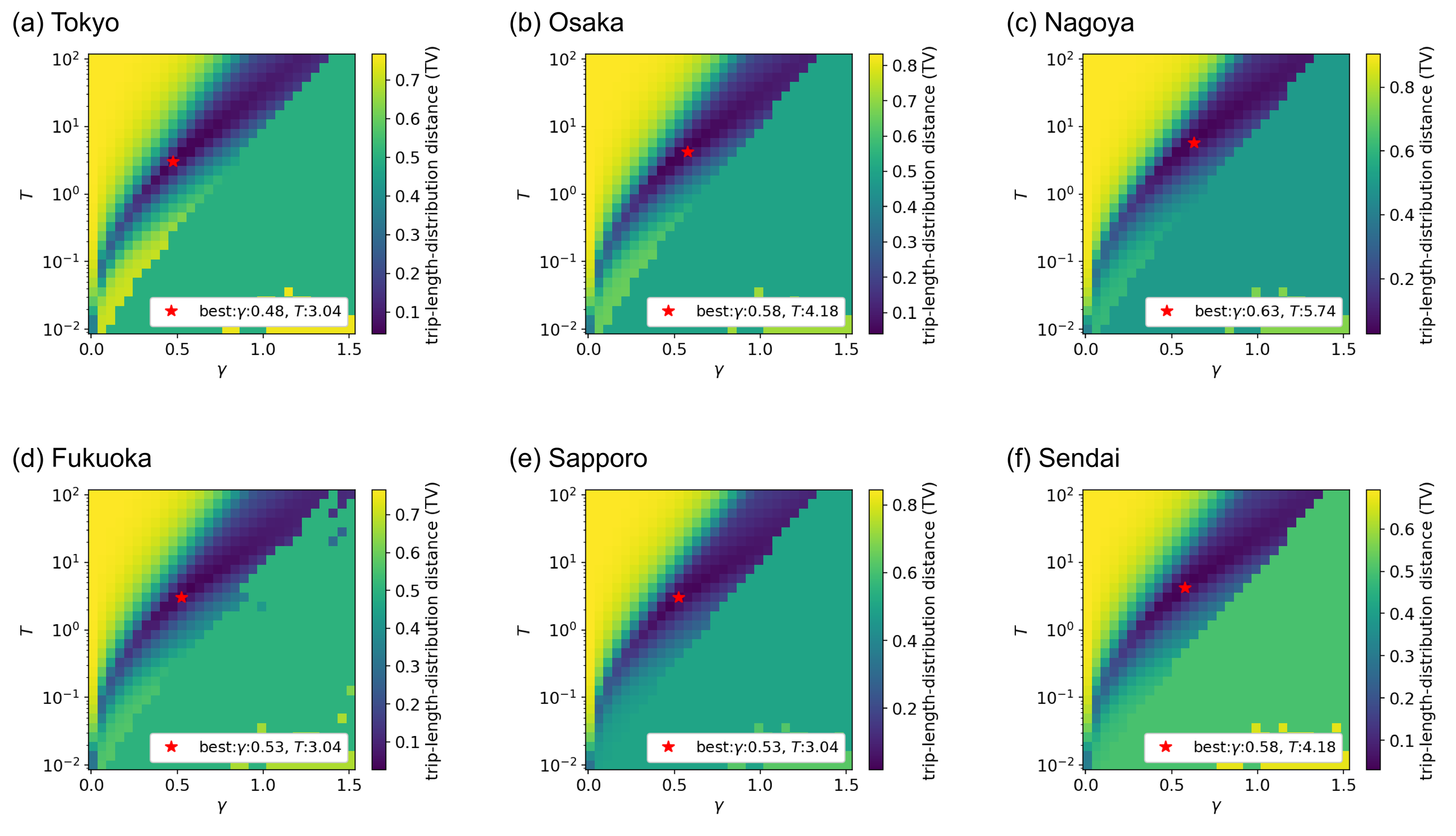}
\caption{%
Heat-bath calibration of the six metropolitan areas: (a)~Tokyo, (b)~Osaka, (c)~Nagoya, (d)~Fukuoka, (e)~Sapporo, (f)~Sendai. Each panel shows the trip-length-distribution objective $D_{\mathrm{tld}}(\gamma, T)$ [Eq.~\eqref{eq:tld-objective}] over the $(\gamma, \log T)$ grid, computed from the Sinkhorn (IPF) gravity equilibrium at each parameter pair; the star marks the minimum. The objective exhibits the same diagonal valley in every city---$\gamma$ and $T$ compensate along it---and the calibrated values cluster in a narrow range, $\gamma^{*} = 0.48$--$0.63$ and $T^{*} = 3.04$--$5.74$: Tokyo $(0.48, 3.04)$, Osaka $(0.58, 4.18)$, Nagoya $(0.63, 5.74)$, Fukuoka $(0.53, 3.04)$, Sapporo $(0.53, 3.04)$, and Sendai $(0.58, 4.18)$.}
\label{app_fig_calibration}
\end{figure*}

\begin{figure*}[t]
\centering
\includegraphics[width=0.98\linewidth]{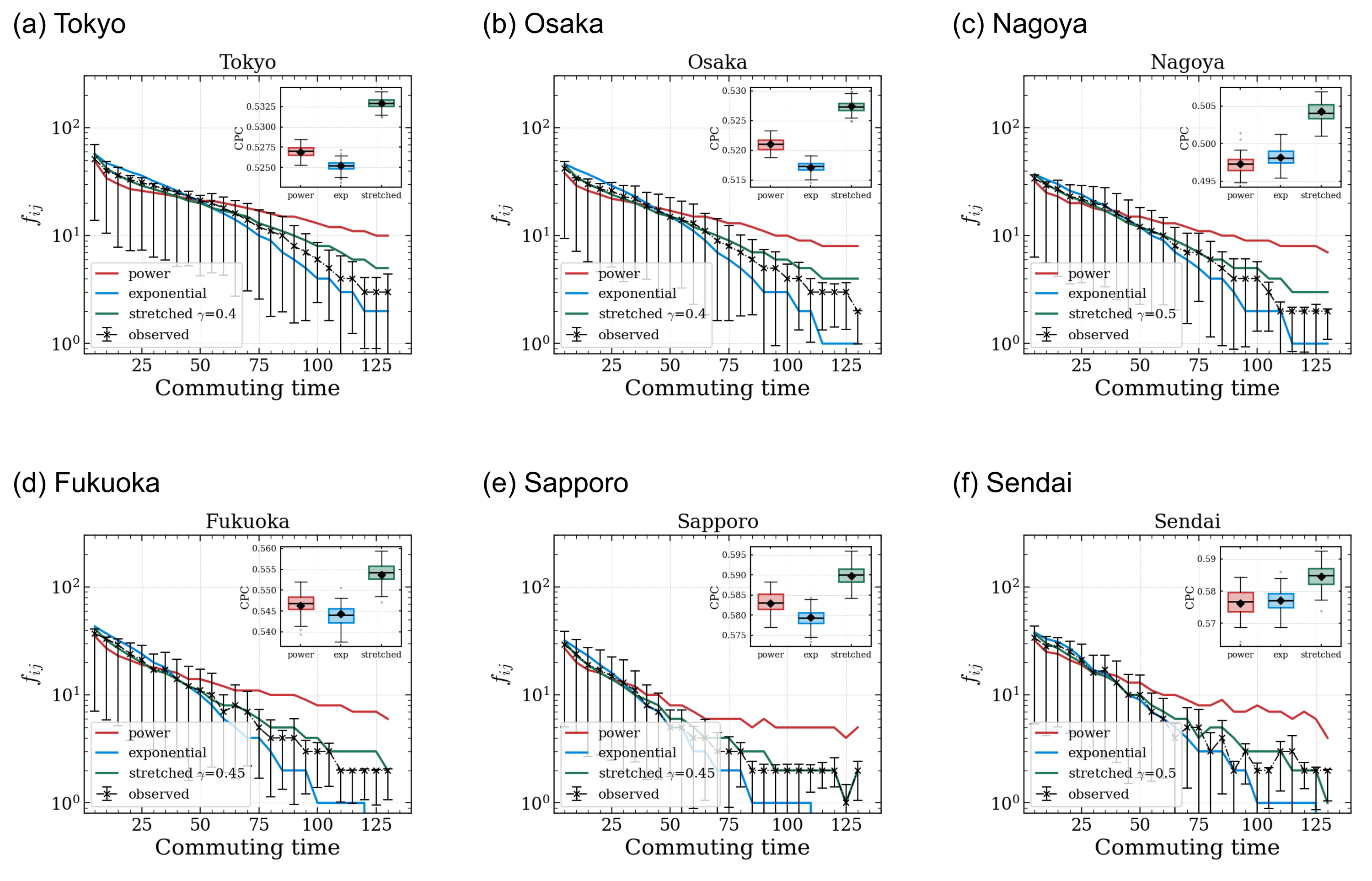}
\caption{%
Independent validation of the deterrence form by Poisson regression (Sec.~\ref{metho:subsubsec_poisson}): (a)~Tokyo, (b)~Osaka, (c)~Nagoya, (d)~Fukuoka, (e)~Sapporo, (f)~Sendai. Each panel shows the binned decay of flow $f_{ij}$ with commuting time (symbols: observed; error bars span the interquartile range across OD pairs in each bin) against the maximum-likelihood fits of the three deterrence families: power law, exponential, and stretched exponential, the latter with fitted exponents $\gamma = 0.4$--$0.5$ across the cities. The stretched form tracks the data over the full range, while the power law overshoots the tail and the exponential undershoots the intermediate times. Insets: common part of commuters [CPC, Eq.~\eqref{eq:cpc}] of the three fits under Poisson resampling of the observed matrix; the stretched-exponential form attains the highest CPC in every city. The fitted exponents are consistent with the heat-bath values $\gamma^{*}$ of Fig.~\ref{app_fig_calibration}.}
\label{app_fig_poisson}
\end{figure*}

\begin{table}[htbp]
  \caption{\label{tab:pois_fit}Fitted parameters of the stretched-exponential
  deterrence function $f(t) \propto \exp\!\left(-t^{\gamma}/T\right)$
  for each metropolitan area: the stretching exponent $\gamma$ and the
  effective temperature $T$.}
  \begin{ruledtabular}
  \begin{tabular}{lcc}
  Metropolitan area & $\gamma$ & $T$ \\
  \hline
  Tokyo   & 0.40 & 2.96 \\
  Osaka   & 0.40 & 2.84 \\
  Nagoya  & 0.50 & 4.21 \\
  Fukuoka & 0.45 & 4.18 \\
  Sapporo & 0.45 & 4.00 \\
  Sendai  & 0.50 & 4.80 \\
  \end{tabular}
  \end{ruledtabular}
\end{table}

\begin{table}[htbp]
  \caption{\label{tab:model_comparison}Comparison of the three deterrence
  functions across metropolitan areas. For each functional form we report
  the common part of commuters (CPC) and the AIC difference
  $\Delta\mathrm{AIC} = \mathrm{AIC} - \mathrm{AIC}_{\mathrm{str}}$
  relative to the stretched exponential, which attains the lowest AIC
  ($\Delta\mathrm{AIC} = 0$) in all six areas.}
  \begin{ruledtabular}
  \begin{tabular}{lccccc}
  & \multicolumn{3}{c}{CPC} & \multicolumn{2}{c}{$\Delta$AIC} \\
  Metropolitan area & Power & Exp. & Str.\ exp. & Power & Exp. \\
  \hline
  Tokyo   & 0.5275 & 0.5255 & 0.5332 & 366{,}371 & 597{,}210 \\
  Osaka   & 0.5215 & 0.5175 & 0.5275 & 198{,}232 & 308{,}843 \\
  Nagoya  & 0.4976 & 0.4985 & 0.5043 & 156{,}571 & 89{,}286 \\
  Fukuoka & 0.5467 & 0.5447 & 0.5538 & 55{,}230 & 54{,}727 \\
  Sapporo & 0.5833 & 0.5803 & 0.5900 & 44{,}211 & 48{,}266 \\
  Sendai  & 0.5768 & 0.5776 & 0.5847 & 22{,}024 & 17{,}653 \\
  \end{tabular}
  \end{ruledtabular}
\end{table}

\subsection{Microscopic origin of the arrest}
\label{app:movers}

\begin{figure}[htbp]
\centering
\includegraphics[width=0.7\linewidth]{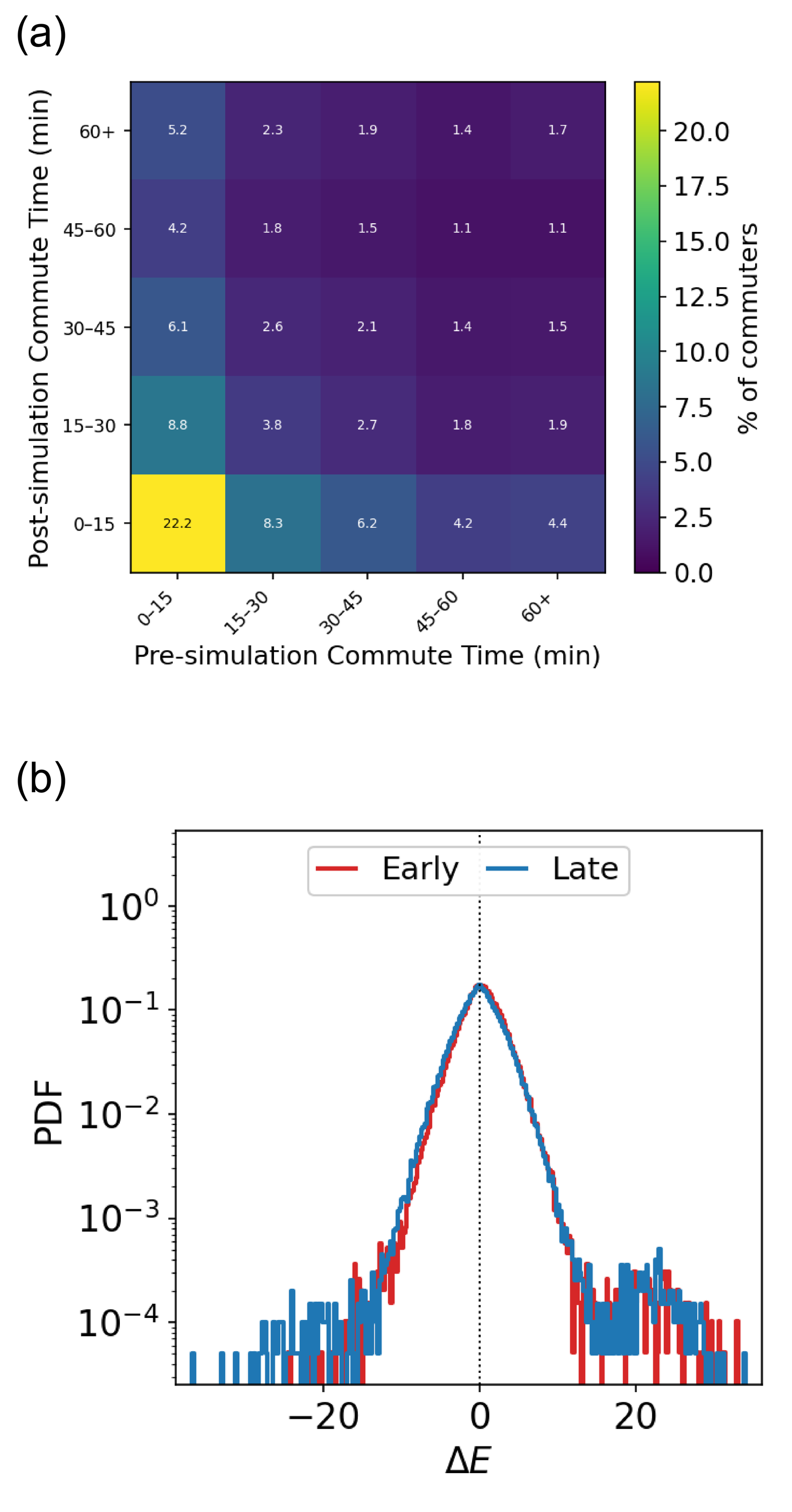}
\caption{Microscopic anatomy of the arrest (Tokyo; $\gamma = 0.48$, $T = T^{*} = 3.04$). (a)~Joint distribution of each mover's commuting time before and after the simulation window (percentage of all movers; a mover may undergo many swaps in between). Movement concentrates in the short-trip block: $22.2\%$ of movers start and end in the $0$--$15$~min class, and $43.1\%$ start and end below $30$~min; the matrix is nearly symmetric about the diagonal, so the exchanges neither lengthen nor shorten commutes systematically---they shuffle commuters among trips of nearly equal cost. (b)~Distribution of the energy change $\Delta E$ [Eq.~\eqref{eq:deltaE}] of \emph{accepted} swaps in an early window of the relaxation versus a late window near arrest. Both distributions are sharply peaked at $\Delta E \approx 0$ and nearly symmetric about it: the dynamics accepts swaps at a finite rate throughout, but the accepted swaps are nearly energy-neutral, which is the microscopic content of the frozen energy fluctuations of Fig.~\ref{result_fig2}. The residual relaxation is carried by short-distance commuters trading nearly cost-degenerate trips---the frozen short-range backbone of the arrest.}
\label{app_movers}
\end{figure}

The arrest is carried by short-distance commuters exchanging nearly cost-degenerate trips. A mover is a commuter whose residence at the end of the observation window differs from that at the start, regardless of the number of intermediate swaps; Fig.~\ref{result_fig5}(a) records the joint distribution of each mover's commuting time at the two snapshots. Movement concentrates in the short-trip block---$22.2\%$ of movers begin and end in the $0$--$15$~min class, the largest single entry of the matrix---and away from this block the matrix is nearly symmetric about the diagonal: for every mover whose commute lengthens there is a matching mover whose commute shortens, so the exchanges shuffle commuters among trips of nearly equal cost without changing the trip-length distribution.

The energy distribution of accepted swaps shows the same neutrality directly. Figure~\ref{result_fig5}(b) compares the distribution of $\Delta E$ [Eq.~\eqref{eq:deltaE}] over accepted swaps between an early window of the relaxation and a late window near arrest: both are sharply peaked at $\Delta E \approx 0$ and symmetric about it---downhill and uphill exchanges balance---so the chain keeps accepting swaps at a finite rate while the accepted swaps no longer move the energy, which is the microscopic content of the frozen energy fluctuations of Sec.~\ref{sec:hysteresis}. The behavioral reading is direct: the algorithm keeps exchanging commuters between nearly degenerate short trips---a 7-minute for a 12-minute commute---because such moves cost almost no energy, while real households do not relocate for differences of this size. The frozen short-range backbone that carries the arrest in the model corresponds, in the city, to the population for which commuting cost is not the binding decision variable; the remaining descent toward equilibrium [Fig.~\ref{result_fig1}(d)] is precisely this arrested short-range relaxation.

\section{Simulation protocols and implementation}
\label{app:protocols}

Table~\ref{tab:protocols} lists the protocol of each measurement. Cooling and heating step through a temperature grid (25 logarithmically spaced points, $T = 0.1$--$10^{4}$); each step runs $10^{8}$ Metropolis steps from the final configuration of the previous step, and $\phi_{\mathrm{norm}}$ is measured from the final configuration of each step. The $\gamma$ sweeps of Fig.~\ref{result_fig2}(b) follow the same scheme with $\gamma$ stepped at fixed $T$ (25 points, $\gamma = 0.01$--$1.2$). The equilibrium heat capacity is obtained exactly from the canonical doubly-constrained equilibrium: $C_{\mathrm{eq}} = \mathrm{d}\langle E\rangle_{\mathrm{eq}}/\mathrm{d}T$, evaluated by central finite differences ($\delta T/T = 10^{-3}$) on the log-domain Sinkhorn solution, so that the fluctuation--dissipation relation $\mathrm{Var}(E) = T^{2}C_{\mathrm{eq}}$ fixes the zero-parameter Gaussian of the ergodic-case energy histogram; it is verified against $\mathrm{Var}(E)$ over the arrested plateau tail of the heating branch, detected automatically as the final stretch of samples within $4\sigma$ of the plateau mean (at least the last $20\%$ of the run). The aging average in Eq.~\eqref{eq:two-time} runs over $2$ independent seeds. The trajectories of Fig.~\ref{result_fig4} record $(\phi_{\mathrm{norm}}, E, S)$ every $10^{4}$ Metropolis steps along single runs.

Simulations are implemented in Python with numba JIT compilation.

\begin{table*}[t]
  \caption{\label{tab:protocols}Simulation protocols. Init.: initial configuration. Meas.: measurement window.}
  \begin{ruledtabular}
  \begin{tabular}{llllll}
  Measurement & Figure & Init. & Schedule & Runs & Meas. \\
  \hline
  Hysteresis $\phi(T)$, $\phi(\gamma)$ & \ref{result_fig2}(a),(b) &
    rand (cool), grnd (heat) & stepped grid (25 pts, $10^{8}$ steps/pt) & 1 per branch &
    final state per step \\
   $P(E)$ & \ref{result_fig2}(c),(d) &
       rand (cool), grnd (heat) & fixed $\gamma^{*}$, $T = 100$ and $T^{*}$, $10^{10}$ steps & 1 per branch &
       full run, $E$ every $2{\times}10^{7}$ steps \\
     Relaxation $\phi(t)$ & \ref{result_fig2}(e),(f) &
       obs/grnd/rand & fixed $\gamma^{*}$, $T = T^{*}$ and $100$, $10^{10}$ steps & 1 per init. & full run \\
     Aging $C(\tau,t_w)$ & \ref{result_fig3}(a)--(c) &
       rand & fixed $\gamma=\hl{0.48}$, $T = 1$, $T^{*}$, $100$ & 2 seeds &
       $t_w = 1$--$5000$ sweeps \\
  Trajectories $(\phi,E,S)$ & \ref{result_fig4} &
    obs/grnd/rand & fixed $(\gamma^{*},T^{*})$ & 1 per init. &
    every $10^{4}$ steps \\
  $\Delta E$ histograms & \ref{result_fig5}(b) &
    obs & fixed $(\gamma^{*},T^{*})$ & 1 &
    first/last $5\%$ of steps \\
  \end{tabular}
  \end{ruledtabular}
\end{table*}

\section{Saddle-point derivation of the concentration results}
\label{app:equiv}

This appendix derives the concentration statements used in Sec.~\ref{sec:exact-results}. Grouping configurations by cost density $e = E[\mathbf{f}]/M$ and using the entropy density \eqref{eq:entropy-density},
\begin{equation}
  Z = \sum_{\mathbf{f} \in \Gamma} \Omega(\mathbf{f})
      e^{-\beta E[\mathbf{f}]}
    \;\asymp\; \int de\; e^{M[s(e) - \beta e]} .
  \label{eq:Zlaplace}
\end{equation}
As $M \to \infty$ the integral is dominated by the maximizer $e^{*}$ of the exponent, with stationarity condition
\begin{equation}
  s'(e^{*}) = \beta ,
  \label{eq:saddle}
\end{equation}
which is unique by Proposition~\ref{prop:concave}, so $-M^{-1}\ln Z \to \min_e[\beta e - s(e)]$, the Legendre transform of $s$. Expanding the exponent to second order about $e^{*}$ gives Gaussian cost fluctuations with variance $-1/[M s''(e^{*})]$, hence the $M^{-1/2}$ relative fluctuation of Eq.~\eqref{eq:fluctuations}.

A caution for the numerical construction of $s(e)$: the parametric sweep---solving the gravity problem at each $\beta$ and recording $(e(\beta), s(\beta))$---reconstructs the concave envelope of $s$ by construction, since a supporting line cannot touch the interior of a nonconcave segment. The sweep is a valid tracing of $s(e)$ here only because Proposition~\ref{prop:concave} establishes concavity independently.


\end{document}